\documentclass[journal,table,xcdraw]{IEEEtran}
\usepackage[table,xcdraw]{xcolor}
%\usepackage{soul,framed} 
%\colorlet{shadecolor}{yellow}
\usepackage{graphicx}
\graphicspath{../jpeg/}
\DeclareGraphicsExtensions{.pdf,.jpeg,.png}
\usepackage[cmex10]{amsmath}
\allowdisplaybreaks[4]
\usepackage{cuted}
\usepackage{stfloats}%
\usepackage{amssymb}
\usepackage{array}
\usepackage{enumitem}
\usepackage{mdwmath}
\usepackage{mdwtab}
\usepackage{eqparbox}
\usepackage{url}
\usepackage{enumerate}
\usepackage{amsfonts}
\usepackage{algorithmic}
\usepackage{multirow}
\usepackage{makecell}
\usepackage{mathtools}
\usepackage{multirow}
\usepackage{graphicx}
\usepackage{tabularray}
\usepackage[ruled,vlined,linesnumbered]{algorithm2e}
\usepackage{nomencl}
\usepackage{cite}
\usepackage{textcomp,booktabs}

\usepackage{mathrsfs}
\usepackage{amsthm}
\theoremstyle{remark}

\usepackage[utf8]{inputenc}

\newtheorem{lemma}{Lemma}
\newtheorem{definition}{Definition}
\newtheorem{assume}{Assumption}

\newtheorem{prop}{Proposition}

\newtheorem{property}{Property}

\SetKwInput{KwInput}{Input}                % Set the Input
\SetKwInput{KwOutput}{Output} 
\makeatletter
\def\underbracex#1#2{\mathop{\vtop{\m@th\ialign{##\crcr
				$\hfil\displaystyle{#2}\hfil$\crcr
				\noalign{\kern3\p@\nointerlineskip}%
				#1\crcr\noalign{\kern3\p@}}}}\limits}

\def\upbracefilla{$\m@th \setbox\z@\hbox{$\braceld$}%
	\bracelu\leaders\vrule \@height\ht\z@ \@depth\z@\hfill 
	\kern\p@\vrule \@width\p@\kern\p@\vrule \@width\p@\kern\p@\vrule \@width\p@
	$}

\def\upbracefillb{$\m@th \setbox\z@\hbox{$\braceld$}%
	\vrule \@width\p@\kern\p@\vrule \@width\p@\kern\p@\vrule \@width\p@\kern\p@
	\leaders\vrule \@height\ht\z@ \@depth\z@\hfill\bracerd
	\braceld\leaders\vrule \@height\ht\z@ \@depth\z@\hfill
	\kern\p@\vrule \@width\p@\kern\p@\vrule \@width\p@\kern\p@\vrule \@width\p@
	$}

\def\upbracefillc{$\m@th \setbox\z@\hbox{$\braceld$}%
	\vrule \@width\p@\kern\p@\vrule \@width\p@\kern\p@\vrule \@width\p@\kern\p@
	\leaders\vrule \@height\ht\z@ \@depth\z@\hfill
	\kern\p@\vrule \@width\p@\kern\p@\vrule \@width\p@\kern\p@\vrule \@width\p@
	$}

\def\upbracefilld{$\m@th \setbox\z@\hbox{$\braceld$}%
	\vrule \@width\p@\kern\p@\vrule \@width\p@\kern\p@\vrule \@width\p@\kern\p@
	\leaders\vrule \@height\ht\z@ \@depth\z@\hfill\braceru$}

\def\upbracefillbd{$\m@th \setbox\z@\hbox{$\braceld$}%
	\vrule \@width\p@\kern\p@\vrule \@width\p@\kern\p@\vrule \@width\p@\kern\p@
	\bracerd\braceld
	\leaders\vrule \@height\ht\z@ \@depth\z@\hfill\braceru$}

\begin{document}
	\title{\huge Goal-Oriented Wireless Communication\\ Resource Allocation for Cyber-Physical Systems}

\author{Cheng Feng,~\IEEEmembership{Student Member,~IEEE,}
	Kedi Zheng,~\IEEEmembership{Member,~IEEE,}
	Yi Wang,~\IEEEmembership{Member,~IEEE,}\\
        Kaibin Huang,~\IEEEmembership{Fellow,~IEEE,}
	Qixin Chen,~\IEEEmembership{Senior Member,~IEEE.}

	%\thanks{This work is supported by the National Key Research and Development Program of China (2020YFF0305800).}

\thanks{Manuscript received 06 November 2023; revised 15 March 2024 and 15 July 2024; accepted 17 July 2024. This work is supported by the National Natural and Science Foundation of China under Grant U2066205 and Grant 52307103. The work described in this paper was supported in part by the Research Grants Council of the Hong Kong Special Administrative Region, China under a fellowship award (HKU RFS2122-7S04), the Areas of Excellence scheme grant (AoE/E-601/22-R), Collaborative Research Fund (C1009-22G), and the Grant 17212423. Part of the described research work is conducted in the JC STEM Lab of Robotics for Soft Materials funded by The Hong Kong Jockey Club Charities Trust. (\textit{Corresponding author: Cheng Feng}, E-mail: fengc.2019@tsinghua.org.cn).}
\thanks{C. Feng, K. Zheng, and Q. Chen are the Department of Electrical Engineering, Tsinghua University, Beijing 100084, China.} 
\thanks{Y. Wang and K. Huang are with the Department of Electrical and Electronic Engineering, The University of Hong Kong. } 
}

\markboth{SUBMITTED TO IEEE TRANSACTIONS ON WIRELESS COMMUNICATIONS}
{Shell \MakeLowercase{\textit{et al.}}: Bare Demo of IEEEtran.cls for IEEE Journals}
\maketitle

\begin{abstract}
The proliferation of novel industrial applications at the wireless edge, such as smart grids and vehicle networks, demands the advancement of cyber-physical systems (CPSs). The performance of CPSs is closely linked to the last-mile wireless communication networks, which often become bottlenecks due to their inherent limited resources. Current CPS operations often treat wireless communication networks as unpredictable and uncontrollable variables, ignoring the potential adaptability of wireless networks, which results in inefficient and overly conservative CPS operations. Meanwhile, current wireless communications often focus more on throughput and other transmission-related metrics instead of CPS goals. In this study, we introduce the framework of goal-oriented wireless communication resource allocations, accounting for the semantics and significance of data for CPS operation goals. This guarantees optimal CPS performance from a cybernetic standpoint. We formulate a bandwidth allocation problem aimed at maximizing the information utility gain of transmitted data brought to CPS operation goals. Since the goal-oriented bandwidth allocation problem is a large-scale combinational problem, we propose a divide-and-conquer and greedy solution algorithm. The information utility gain is first approximately decomposed into marginal utility information gains and computed in a parallel manner. Subsequently, the bandwidth allocation problem is reformulated as a knapsack problem, which can be further solved greedily with a guaranteed sub-optimality gap. We further demonstrate how our proposed goal-oriented bandwidth allocation algorithm can be applied in four potential CPS applications, including data-driven decision-making, edge learning, federated learning, and distributed optimization. Through simulations, we confirm the effectiveness of our proposed goal-oriented bandwidth allocation framework in meeting CPS goals.
\end{abstract}

\begin{IEEEkeywords}
Cyber-physical systems, goal-oriented communications, semantic communications, information utility, communication resource allocation, smart grids, vehicle networks.
\end{IEEEkeywords}
\IEEEpeerreviewmaketitle

\section{Introduction}
The wireless edge is witnessing a proliferation of new industrial applications that span areas such as smart grids, vehicle networks, and the Internet of Things. These applications predominantly involve data-centric tasks, including data-driven decision making, prediction, and control~\cite{bi2015wireless}. As a result, they are transitioning into cyber-physical systems (CPSs), whose performance is influenced by both their physical components and wireless communication networks~\cite{humayed2017cyber}.

The surge in data traffic and the extensive access demands in CPSs present significant challenges for wireless communications~\cite{dawy2016toward}. To address that, the communication community is innovating strategies to optimize the management of wireless communication networks. A key advancement is \textit{semantic communication}~\cite{Qiao2021}, which considers the importance and semantics of data, not just symbols in communications~\cite{uysal2022semantic}. In contrast to the conventional Shannon paradigm, semantic communications only transmit necessary information relevant to specific tasks. The transmission of irrelevant information is omitted~\cite{qin2021semantic}. 

Meanwhile, stakeholders in CPS-related domains gradually recognize the crucial role of wireless communication networks~\cite{burg2017wireless}. Delays, reliability, and other factors can affect the CPS operation goals~\cite{Yu2016}. For example, increased delays and reduced reliability can lead to suboptimal decisions and increase system costs. However, these industries often view wireless communications as unpredictable, unmanageable, and potentially adverse factors in CPSs~\cite{zhang2022modeling}. This perspective may be rooted in conventional communication networks, which prioritize data delivery based on transmission-related objectives, without considering their impact on physical systems. However, in reality, the communication network can be customized to accommodate the customized needs of CPS task goals~\cite{Arsham2022}.

An evident research gap remains: How to measure the contribution of data semantics to CPS operation goals, and how to subsequently manage radio resource allocations in assisting CPSs in achieving their goals? 

\subsection{Related Works}
\subsubsection{CPS Operation Considering Effects of Communication}
Using smart grids as a typical example of the CPS, researchers examined the effects of communication delays and losses on system stability and operations. Ref.\cite{coelho2016small} investigated the effects of communication delays on microgrid small signal stability, while Ref.\cite{ZhengLu-423} explored the effects of communication reliability on demand response. Further studies investigated the influence of communications on distributed algorithms, including distributed economic dispatch \cite{Yang2017} and distributed frequency regulation services \cite{Nazari2020}.

Traditionally, these works perceived communication as unpredictable, difficult to manage, and possibly detrimental. To mitigate such adverse effects, decentralized~\cite{Liang2013}, quantized~\cite{Rabbat2005}, and event-triggered algorithms~\cite{liu2017event} were designed. These approaches aimed to reduce data traffic, thus reducing the potential negative consequences of communication networks~\cite{berahas2018balancing}. However, as the saying goes, there is no free lunch. These communication-efficient algorithms, while reducing data traffic, introduced increased complexity. Crucially, communication networks are not always unmanageable or invariably harmful. Our work aims to challenge the stereotype that communication networks in CPSs are unmanageable.

\subsubsection{Semantic, Task-oriented, Goal-oriented, and Utility-Oriented Communications}
The field of semantic communication has attracted significant interest from both academia and industry. Refs.\cite{Luo2022,Yang2023} provided a comprehensive review of semantic communications. Notably, the semantics of the same piece of data can vary based on its intended task~\cite{shi2023task}. Consequently, some researchers have refined semantic communications to be task-oriented and goal-oriented, emphasizing the extraction and transmission of only task-relevant information~\cite{Deniz2023}. Goal-oriented communication is expected to be integrated into future 6G communication standards~\cite{strinati20216g}. 

Many researchers have devoted their efforts to designing appropriate encoders and decoders for goal-oriented communications, moving beyond Shannon's paradigm~\cite{Xie2022,Xie2023}. It was recognized that extracting semantics based solely on human experiences is challenging~\cite{Lan2021}. Typically, deep learning techniques have been employed to extract the underlying semantics of data~\cite{Xie2021}. While machine learning methods are prevalent, some studies have attempted to integrate information theory to improve the interpretability of results~\cite{shao2021learning}. Semantic communications typically aim to minimize the communication rate required for specific tasks by designing encoders and decoders. Here, we focus on a resource allocation approach for multiple devices. By designing a resource allocation policy that differentiates the contributions of devices to tasks, we can also minimize the total communication rate. Additionally, we can address the dual problem, which is to maximize task completion efficiency while adhering to a sum communication rate constraint.

The communication resource allocation problem has long been viewed as a special case of the utility maximization problem \cite{chiang2007layering}. The network utility aims to express the quality of service requirements of various devices while ensuring fairness in resource allocation. Typical forms of utility include logarithmic utility and max-min fairness metrics \cite{srikant2013communication}. Utility-oriented communications have been further extended into the semantic-based communication paradigm, where utility is measured by the semantic rate. Employing various utility measures enables the applications such as video streaming \cite{wang2023utility} and vehicular networks \cite{miao2024utility}. However, the specific forms of utility remain limited and are difficult to determine. This paper explores a more general form of the goal/utility function that can be directly linked to CPS task requirements. 

\subsubsection{Efficient Communications for Learning}
The rise of learning-related tasks has sparked extensive research on communication-efficient learning~\cite{Park2021}. Some studies focused on data compression and quantization to reduce data traffic, similar to techniques used in the communication-efficient distributed design~\cite{Shi2020}. Interestingly, another subset of research developed the importance-aware approach to identify valuable data in learning tasks and reduce unnecessary data communications. This approach is in close alignment with the spirit of goal-oriented and semantic communications. In the context of data sample transmissions,  Ref.\cite{liu2020data} developed a communication rule based on the importance of data to train a model. Many studies investigated learning problems in a distributed setting, mostly referred to federated learning~\cite{niknam2020federated}. Refs.\cite{chen2021communication,amiri2021convergence,shi2020joint} employed various criteria to identify the importance of gradient updates, subsequently scheduling participation and consequently reducing the overall training time. In fact, our analysis demonstrates that learning can be viewed as a specific CPS task aimed at minimizing training loss and maximizing inference accuracy in the shortest possible time. Our framework extends to a broader range of CPS applications and is not limited to learning.

\subsection{Contributions and Organizations}
In this study, we introduce the unified framework for goal-oriented radio resource allocation problems in CPSs. Our objective is to bridge the gap between the communication community and other CPS domains. For communications, we analytically evaluate the contributions of transmitted data and develop a goal-oriented bandwidth allocation approach that smoothly integrates with current systems. For other CPS domains, we aim to show that the communication network can be tailored to improve CPS performance and better fulfill its goals. Our major contributions include:
\begin{itemize}[leftmargin=8pt]
    \item Introduce a unified framework for goal-oriented radio resource management in CPSs. We formulate a wireless Resource Block (RB) Allocation (RBA) problem in orthogonal frequency division multiple access (OFDMA) based wireless communication networks. The RBA problem maximizes the \textit{information utility gain}, instead of throughput-related objectives, to better ensure the realization of the CPS goals under constraints of limited radio resources.
    \item Propose a divide-and-conquer and greedy solution approach to the goal-oriented RBA problem, taking into account its combinatorial complexity and other practical considerations. The information utility gain is divided into the sum of marginal information utility gains, enhancing problem parallelism. The goal-oriented RBA problem is then reformulated as a knapsack problem with limited RB capacity, solved via a greedy method with a sub-optimality gap guarantee and reduced complexity.
    \item Develop application-specific designs of the goal-oriented RBA for four representative CPS applications. The specific approach to calculating the information utility gain for different applications is further detailed. The applications require massive connections or large-volume data transmissions, covering common CPS applications, including decision-making, learning, and optimization.  We use smart grids and vehicle networks as examples of CPS systems to explain specific applications. 
    \item Conduct a simulation to verify the effectiveness of the goal-oriented RBA in CPSs. The performance of the proposed RBA method is compared with throughput maximization and pure data utility RBA policies, in terms of both throughput and CPS goal values. The results highlight the advantages of the proposed goal-oriented RBA framework in guaranteeing CPS goals under radio resource constraints.
\end{itemize}

The remainder of this paper is structured as follows: Section \ref{system} introduces the system model. Section \ref{solution} outlines the solution challenges, develops the solution method, and addresses practical concerns.  Section \ref{applications} details the computation of information utility gains for four applications: data-driven decision-making, edge learning, federated learning, and distributed optimization. Section \ref{case} presents case studies, followed by conclusions in Section \ref{conclusion}.

\textit{Notations:}  Scalars are represented in normal font. Vectors and matrices are represented in bold font. $\mathbf{0}$ represents a zero matrix/vector, $\mathbf{I}$ indicates the identity matrix. $\bigcup_{i=1}^n S_i$ indicates arbitrary unions of set $S_1 \cup S_2 \cup \cdots \cup S_n$.  $|\boldsymbol{x}|$ denotes the cardinality of the set $\boldsymbol{x}$. $\mathbb{E},\mathbb{H},\mathbb{P}$ represents probability distributions, entropy, and mathematical expectations. $\|\boldsymbol{x}\|_2$ and $\|\mathbf{x}\|_F$ indicate vector $\boldsymbol{x}$'s 2 norm and matrix $\mathbf{x}$'s Frobenius norm, respectively.  $\nabla$ and $\nabla^2$ indicate the gradient and the second-order partial derivatives, respectively.  $\left<\boldsymbol{x},\boldsymbol{y}\right> $ denotes the inner product. $\lceil x\rceil$ denotes the round up functions of $x$.

\section{System Model} \label{system}
This section establishes the basic settings of the CPS system and the theoretical foundations for the entire RBA problem.

\subsection{Cyber-physical System Architecture}
\begin{figure}[tp]
    \centering
    \includegraphics[width=0.49\textwidth]{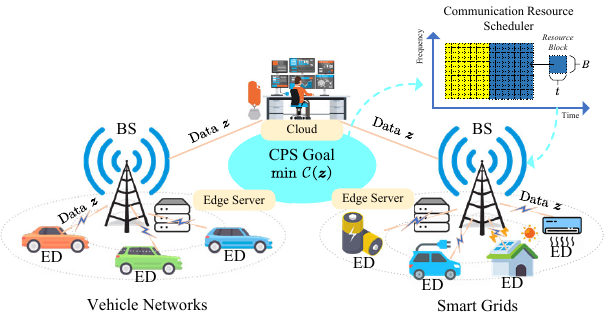}
    \caption{The architecture of the cyber-physical system. The system's ultimate objective is to minimize system goal function $\mathcal{C} \left( \boldsymbol{z} \right)$.}
	\label{fig:intro}
\end{figure}
The architecture of the CPS is illustrated in Fig. \ref{fig:intro}. The physical subsystem comprises various end devices (EDs), labeled as \(j=1,...,J\). In the context of smart grids, EDs might include solar panels, home appliances like air conditioners, and energy storage. For vehicle networks, EDs mainly represent on-road electric vehicles. The cyber-subsystem mostly involves the communication system. In this study, we mainly focus on the uplink communication network.

Within this framework, a CPS operator carries out specific tasks. Usually, CPS operators will process these tasks on nearby (edge) servers or larger cloud servers. The process is as follows: the operator first collects fresh data from the EDs, usually over cellular networks via base stations (BSs). After receiving these data, CPS operators analyze them on the edge server or in the cloud, determine optimal strategies, and if necessary, send commands back to the EDs. 

In the proposed paradigm, the CPS operator no longer views reliability, latency, or other factors as uncontrollable elements. Instead, the CPS operator can control the process of allocation of communication resources. Furthermore, to adaptively support its tasks, the CPS operator allocates communication resources by customizing the system goal function $\mathcal{C} \left( \boldsymbol{z} \right)$ of the allocation problem according to its requirements. The system objective function is contingent on the data $\boldsymbol{z}$ and reflects how the `cyber' side (the communication network) affects the `physical' world (including costs and performance of physical system decisions). Different applications may present different forms of the objective function $\mathcal{C}(\boldsymbol{z})$. 
 
\subsection{Wireless Communication Model}
We assume that orthogonal frequency division multiple access (OFDMA) is utilized for uplink data transmissions. Following LTE (Long-Term Evolution) and 5G cellular communication standards, we use resource block (RB) as the basic unit of resources allocated to EDs in both the frequency and time domains, as shown in Fig. \ref{fig:intro}. For a given RB, labeled as $i$, the data it can deliver (in bits), denoted as $r_{i,j}$, when assigned to ED $j$ is determined by: 
\begin{equation}
	r_{i,j}=tB\log_2\left( 1+\frac{g_j p_j}{B\sigma ^2} \right),
\end{equation}
where $t$ is the time duration of RB $i$. $B$ defines the bandwidth of RB $i$. $g_j$ is the channel gain between the cellular BS and ED $j$. $\sigma ^2$ is the noise power spectral density. $p_j$ is the transmitting power of ED $j$.

During a time period of $T$ with a total of $I$ RBs, the cumulative data bits that an ED $j$ can transmit are given by:
\begin{equation}
	r_j=\sum\nolimits_{i=1}^I{a_{i,j}tB\log _2\left( 1+\frac{g_jp_j}{B\sigma ^2} \right)},
\end{equation}
where $a_{i,j}$ denotes channel allocations, with $a_{i,j}=1$ indicating RB $i$ is allocated to $j$. Otherwise, it equals $0$. In this formula, the small-scale fading effects for physical resource blocks are omitted for simplifications. This enables us to deal with the RB allocation process in a similar manner as the bandwidth allocation problem \cite{chen2021communication,wadu2021joint}.

The focus of this paper is on uplink transmission. Historically, the uplink channel (from EDs to the BSs) often received a smaller share of the communication bandwidth. Consequently, the uplink channel often serves as the bottleneck of the whole network. In most cases, downlink transmissions in CPS applications are broadcast and do not cause communication congestion. 

\subsection{Goal-oriented RBA Problem}
The complete goal-oriented RBA problem can be formulated as Problem (P1) below:
\begin{align}
\max_{a_{i,j},s_j} \,\,&  \mathcal{C} \left( \boldsymbol{z}_{\mathrm{old}} \right)  - \mathcal{C} \left( \boldsymbol{z}_{\mathrm{new}} \right)  \label{eq:objective}
\\
\mathrm{s}.\mathrm{t}.\,\,&\sum\nolimits_{i=1}^I{a_{i,j}tB\log _2\left( 1+\frac{g_jp_j}{\sigma ^2} \right)}\geqslant s_jr_{\min ,j}, \label{eq:rate}
\\
\left( \mathbf{P1} \right)~~~~~&\sum\nolimits_{j=1}^J{a_{i,j}=1,}~~\forall i, \label{eq:allocate1}
\\
&a_{i,j}\in \left\{ 0,1 \right\},s_{j}\in \left\{ 0,1 \right\},~~\forall i,j , \label{eq:allocate2}
\\
\,\,         &\boldsymbol{z}_{\mathrm{new}}=\boldsymbol{z}_{\mathrm{old}}\cup \left\{ \bigcup{s_j\boldsymbol{z}_{j}^{+}} \right\} , \label{eq:data_collection}	
\end{align}
where the decision variables are channel allocations $a_{i,j}$, and ED $j$ transmission permission $s_{j}$ (determining whether ED $j$'s data transmission is allowed). $a_{i,j}$ and $s_{j}$ are both binary variables. $a_{i,j}=1$ indicates RB $i$ is allocated to ED $j$. $s_{j}=1$ implies ED $j$ is allocated sufficient communication resources to facilitate data transmission. The existing dataset before communication is represented by \(\boldsymbol{z}_{\mathrm{old}}\), while \(\boldsymbol{z}_j^+\) denotes the new data generated by ED \(j\), which is not yet known to the CPS operator. \(\boldsymbol{z}_{\mathrm{new}}\) denotes all the data that the CPS operator has acquired after the communication process. The CPS operator decides on ED scheduling $s_j$ and bandwidth allocation $a_{i,j}$. These decisions determine the newly collected data $s_j\boldsymbol{z}_j^+$ from devices, which in turn determine the total data $\boldsymbol{z}_{\mathrm{new}}$ after communications. This ultimately defines the system goal function value $\mathcal{C} \left( \boldsymbol{z}_{\mathrm{new}} \right)$. 

Constraint Eq. \eqref{eq:rate} imposes a restriction to ensure the minimal requirements for reliable data communications: ED \(j\)’s data can be transmitted (\(s_{j}=1\)) if, and only if, the cumulatively allocated channels can assure the minimum reliable data rate \(r_{\min ,j}\) is met. Constraint Eq. \eqref{eq:allocate1} ensures that a single resource block can only be allocated to one ED. Eq. \eqref{eq:data_collection} specifies that the latest dataset $\boldsymbol{z}_{\mathrm{new}}$ that the operators have is the set of unions of the old dataset $\boldsymbol{z}_{\mathrm{old}}$ and freshly transmitted data ${s_{j}\boldsymbol{z}_j^+}$. 

The system goal function, $\mathcal{C} \left( \boldsymbol{z} \right)$, defines a mapping from data to a real number, representing \textit{a measure of information utility}. The objective of Problem (P1) is to maximize the \textit{total information utility gain}, which is the difference between the anticipated system goal function, $\mathcal{C} \left( \boldsymbol{z}_{\mathrm{new}} \right)$, after new data transmission, and $\mathcal{C} \left( \boldsymbol{z}_{\mathrm{old}} \right)$ before data transmission. 

\subsection{System Goal Function $\mathcal{C} \left( \boldsymbol{z} \right)$}
Conventionally, the system goal function, $\mathcal{C} \left( \boldsymbol{z} \right)$, equals the Shannon entropy of data, $\mathbb{H} \left( \boldsymbol{z} \right)$. Accordingly, Problem (P1) seeks to maximize the entropy reduction as much as possible. However, entropy is just one approach to mapping data to a real number, and greater entropy reduction does not always signify greater utility gain. The selection of $\mathcal{C} \left( \boldsymbol{z} \right)$ depends on the requirements of the CPS task and the semantics of the data. 

\textit{Example 1 - Measurement and Estimator}

The goal of the CPS operator is to measure two sensor values, $z_1$ and $z_2$. $z_1$ follows a discrete distribution: $z_1=0$ or $100$, each with a probability of $1/2$. $z_2$ follows a discrete distribution: $z_2=0,1,2,3,4,5$, each with an equal probability of $1/6$. The cost of the CPS operator is defined as the quadratic error between the actual values and their estimators, $\hat{z}_1$ and $\hat{z}_2$, that is, $\mathbb{E}[(\hat{z}_1-z_1)^2+(\hat{z}_2-z_2)^2]$. If the real-time values are unknown to the CPS, the best estimators are the least square estimators: $\hat{z}_1=\mathbb{E}[z_1]$ and $\hat{z}_2=\mathbb{E}[z_2]$. 

Now, the CPS operator can collect real-time measurements of $z_1$ and $z_2$, but unfortunately is limited to collecting just one of them due to communication constraints. Setting the system goal function, $\mathcal{C} \left( \boldsymbol{z} \right)$, as Shannon entropy, the optimal choice becomes collecting data from sensor 2, as knowing $z_2$ leads to a greater reduction in entropy. However, considering the CPS objective function, setting the system goal function as \textit{variance} suggests collecting data from sensor 1 to reduce the CPS cost, $\mathbb{E}[(\hat{z}_1-z_1)^2+(\hat{z}_2-z_2)^2]$, as much as possible. 

\textit{Example 2 -	Image Semantic Data}

The goal of the CPS operator is to determine whether there are pedestrians in the middle of the road using uploaded images, $\boldsymbol{z}$. The CPS goal function, $\mathcal{C} \left( \boldsymbol{z} \right)$, is defined as $-\mathbb{P}[$presence of pedestrians in the middle of the road in images $\boldsymbol{z}]$, with the probability computed through a learning model.

Here, pixels of the images irrelevant to the middle of the road, such as sidewalks and the sky, undoubtedly reduce entropy and provide the operator with a more comprehensive understanding of the road's context. However, this does not enhance inference accuracy as these elements do not add any semantic value to the task the CPS operator intends to accomplish. Irrelevant data can be compressed or even omitted in communications. 

Generally, the task objective $\mathcal{C} \left( \boldsymbol{z} \right)$ — whether explicitly or implicitly aimed at minimizing factors such as decision cost or negative performance — maps data to a real number. Fundamentally, they directly reflect how data values, semantics, and contents contribute to CPS system goals. Therefore, as demonstrated in the two examples, the task objective can directly serve as the system goal function.  Regardless of its formulation, it is recommended that $\mathcal{C} \left( \boldsymbol{z} \right)$ should satisfy the property of \textit{information positivity}:
\begin{property}[Information Positivity]
	Knowing more information can only decrease the value of the system goal function, $\mathcal{C} \left( \boldsymbol{z} \right)$, that is:
	\begin{equation}
		\mathcal{C} \left( \boldsymbol{z} \right) -\mathcal{C} \left( \boldsymbol{z}' \right) \geqslant 0, \forall \boldsymbol{z}\subseteq \boldsymbol{z}'.
	\end{equation}
\end{property}
This property ensures that acquiring more information always benefits the CPS operator. Furthermore, to calculate the information utility gain with low complexity, other necessary properties will be discussed in the following section.

\section{Solution Methods} \label{solution}
\begin{table}[t]
\centering
    \caption{Symbol Description}
\label{tab:my-table}
\resizebox{0.5\textwidth}{!}{%
    \begin{tabular}{cc}
        \toprule
        Symbol        & \multicolumn{1}{c}{Description} \\ \toprule
        \multicolumn{2}{c}{\textit{Sets}}                        \\ \toprule
        $i=1,...,I$   & RB Index                        \\
        $j=1,...,J$   & ED Index                        \\
        $k=1,...,K$   & Iteration Round Index           \\
        $d=1,...,D_j$ & Data Sample Index               \\ \toprule
        \multicolumn{2}{c}{\textit{Variables}}                   \\ \toprule
        $a_{i,j}$     & Channel Allocations             \\
        $s_j$         & Transmission Permissions        \\
          $\boldsymbol{z}^+,\boldsymbol{z}_\mathrm{new}$ &
       \begin{tabular}[c]{@{}l@{}}Transmitted Data, Total Data After Transmissions\end{tabular} \\
          $\boldsymbol{\pi}$                            & \begin{tabular}[c]{@{}l@{}}Decisions and Actions (Data-driven Decision Making)\end{tabular} \\
        $\boldsymbol{\theta },\boldsymbol{\theta}^-,\boldsymbol{\theta}^+$ & \begin{tabular}[c]{@{}l@{}}Model Weights (Edge \& Federated Learning)\end{tabular}   \\ 
          $\boldsymbol{g},\tilde{\boldsymbol{g}}$       & \begin{tabular}[c]{@{}l@{}}Model Gradients (Federated Learning)\end{tabular}                               \\
          $\boldsymbol{\theta}_0,\boldsymbol{\theta}_j$ & \begin{tabular}[c]{@{}l@{}}Decision Variables (Distributed Optimization)\end{tabular} \\
          $\boldsymbol{\lambda}$                        & \begin{tabular}[c]{@{}l@{}}Dual Variables (Distributed Optimization)\end{tabular}                          \\
          \toprule
          \multicolumn{2}{c}{\textit{Parameters}} \\ \toprule
          $t$      & Time Duration of RBs            \\
          $B$      & Bandwidth of RBs                \\
          $g_j$    & Channel Gains                   \\
          $p_j$    & Transmitting Power              \\
          $\sigma$    & Noise Power Spectral Density          \\
          $\boldsymbol{z}_\mathrm{old}$ &
       \begin{tabular}[c]{@{}l@{}}Historical Data\end{tabular} \\  
          $\boldsymbol{\xi}$     & \begin{tabular}[c]{@{}l@{}}Boundary Parameters (Data-driven Decision Making)\end{tabular}                \\
       $\mathsf{z}_j$         & \begin{tabular}[c]{@{}l@{}}Raw Data (Federated Learning and Distributed Optimization)\end{tabular} \\        
          \toprule
   \multicolumn{2}{c}{\textit{Functions}}                                                                                                     \\ \toprule
      $\mathcal{C}(\cdot)$   & CPS Goal                                                                                                     \\
      $h(\cdot)$             & \begin{tabular}[c]{@{}l@{}}Decision Cost (Data-driven Decision Making)\end{tabular}                      \\
    $L(\cdot),\ell(\cdot)$ & \begin{tabular}[c]{@{}l@{}}Loss Function (Edge \& Federated Learning)\end{tabular}               \\
    $\mathcal{L}(\cdot)$   & \begin{tabular}[c]{@{}l@{}}Augmented Lagrangian Function (Distributed Optimization)\end{tabular}                                                                                                                  \\ \bottomrule
    \end{tabular}%
\hspace{0.5em}
}
\end{table}
The goal-oriented RBA problem (P1) is a large-scale combinatorial problem. Directly addressing it may lead to several problems. This section outlines the practical challenges and solution methods for Problem (P1), ensuring its smooth integration into real-world communication systems.
\subsection{Solution Challenges}
 Direct approaches to solving Problem (P1) are hindered by the following computational and practical challenges:
\begin{itemize}[leftmargin=8pt]
\item \textit{Exponential Computation Complexity}: The binary nature of the variable \(s_j\) implies that \(\boldsymbol{z}_{\mathrm{new}} = \boldsymbol{z}_{\mathrm{old}}\cup \left\{ \bigcup{s_j\boldsymbol{z}_{j}^{+}} \right\}\) has an exponential array of potential combinatorial values. This requires the computation of all feasible permutations of \(\mathcal{C} \left( \boldsymbol{z}_{\mathrm{new}} \right)\), corresponding to all possible combinations of \(s_j\). Consequently, this leads to an exponential computational complexity quantified as \(2^{\sum\nolimits_j^{}{| \boldsymbol{z}_{j}^{+} |}}\), where \(| \boldsymbol{z}_{j}^{+} |\) signifies the cardinality of ED \(j\)'s dataset. The resource allocation process itself introduces additional computation overheads and takes additional time. With the complex allocation method, more time is required to determine which resources should be allocated to which devices. Thereby, we should minimize the complexity for the resource allocation process in real communication systems.
\item \textit{Non-Causal Paradox of Future Data}: The objective function of the problem (P1) can be theoretically computed when $\boldsymbol{z}_{\mathrm{new}}$ is known. In practical scenarios, this latest dataset isn't available until communication takes place. Hence, performing RBA based on future data is non-causal and not feasible. 
\item \textit{Complexity and Optimality Gap Trade-off}: The goal-oriented RBA will incur extra overhead and delay for CPSs. If the additional computational overhead to compute allocation decisions is significantly high, it can introduce latency and potentially offset the benefits obtained through the goal-oriented RBA. The practical solution should balance the complexity and the gap to the ideal optimal solution. 
\end{itemize}

\subsection{Divide-and-Conquer Approach}
We start by simplifying the computation of information utility gain. We assume the information utility gain approximately satisfies the submodular property, as shown below: 
\begin{property}[Submodular]
	The negative CPS goal $-\mathcal{C}(\boldsymbol{z})$ is a \textit{submodular function} or can be closely approximated by a \textit{submodular function}. This is defined in the following manner: For every dataset $\boldsymbol{z}$ and $\boldsymbol{z}_1, \boldsymbol{z}_2 \notin \boldsymbol{z}$ such that $\boldsymbol{z}_1 \neq \boldsymbol{z}_2$, we have that $-\mathcal{C}\left(\boldsymbol{z} \cup\left\{\boldsymbol{z}_1\right\}\right)-\mathcal{C}\left(\boldsymbol{z} \cup\left\{\boldsymbol{z}_2\right\}\right) \geqslant -\mathcal{C}\left(\boldsymbol{z} \cup\left\{\boldsymbol{z}_1, \boldsymbol{z}_2\right\}\right)-\mathcal{C}\left(\boldsymbol{z} \right)$. When the inequality is strictly binding, the property becomes the \textit{Additive} property, allowing direct summation of individual utility gains without loss.
\end{property}
Submodular functions are widely used in game theory and machine learning, meaning that obtaining more information is beneficial, but the marginal return diminishes as the dataset grows. With the submodular property, the total information utility gain can be bounded as follows:
\begin{prop}
    When the negative CPS goal satisfies the submodular property, the total information gain can be bounded by the sum of individual information utility gain $\Delta _j$, shown as below:
    \begin{equation}
    \begin{aligned}
    \,\, &\mathcal{C} \left( \boldsymbol{z}_{\mathrm{old}} \right) -\mathcal{C} \left( \boldsymbol{z}_{\mathrm{old}}\cup \left\{ \bigcup{s_j\boldsymbol{z}_{j}^{+}} \right\} \right) 
    \\
    \leqslant &\sum\nolimits_j^{}{\mathcal{C} \left( \boldsymbol{z}_{\mathrm{old}} \right) -\mathcal{C} \left( \boldsymbol{z}_{\mathrm{old}}\cup \left\{ s_j\boldsymbol{z}_{j}^{+} \right\} \right)}
    \\
    \left( \mathbf{P2} \right)~~=&\sum\nolimits_j^{}{s_j\underset{\Delta _j}{\underbrace{\left( \mathcal{C} \left( \boldsymbol{z}_{\mathrm{old}} \right) -\mathcal{C} \left( \boldsymbol{z}_{\mathrm{old}}\cup \left\{ \boldsymbol{z}_{j}^{+} \right\} \right) \right) }}}.
    \end{aligned}
    \end{equation}
\end{prop}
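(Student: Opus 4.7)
The plan is to recognize the claim as the classical subadditivity-of-marginal-gains consequence of submodularity and to establish it via a telescoping argument. Writing $f := -\mathcal{C}$ for brevity, the inequality is equivalent to showing $f(\boldsymbol{z}_{\mathrm{old}} \cup \{\bigcup_j s_j \boldsymbol{z}_j^+\}) - f(\boldsymbol{z}_{\mathrm{old}}) \leq \sum_j s_j [f(\boldsymbol{z}_{\mathrm{old}} \cup \{\boldsymbol{z}_j^+\}) - f(\boldsymbol{z}_{\mathrm{old}})]$, so it suffices to prove that the aggregated marginal gain from adding all scheduled batches $\boldsymbol{z}_j^+$ is no larger than the sum of the individual marginal gains measured from the baseline $\boldsymbol{z}_{\mathrm{old}}$.

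I would then order the scheduled devices, letting $S = \{j : s_j = 1\} = \{j_1, \ldots, j_n\}$, and define the intermediate datasets $Z_0 = \boldsymbol{z}_{\mathrm{old}}$ and $Z_k = Z_{k-1} \cup \{\boldsymbol{z}_{j_k}^+\}$ for $k = 1, \ldots, n$. Unscheduled devices contribute $0$ to both sides and can be discarded. The left-hand side telescopes as $\sum_{k=1}^n [f(Z_k) - f(Z_{k-1})]$, reducing the proof to showing $f(Z_k) - f(Z_{k-1}) \leq f(\boldsymbol{z}_{\mathrm{old}} \cup \{\boldsymbol{z}_{j_k}^+\}) - f(\boldsymbol{z}_{\mathrm{old}})$ for each $k$, after which summation gives the required bound.

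The main obstacle is that Property 2 states submodularity only in its pairwise, single-element form, whereas the telescoping step requires the set-level diminishing-returns inequality $f(A \cup \{x\}) - f(A) \geq f(B \cup \{x\}) - f(B)$ for $A \subseteq B$ and $x \notin B$, applied with $A = \boldsymbol{z}_{\mathrm{old}}$, $B = Z_{k-1}$, and $x = \boldsymbol{z}_{j_k}^+$. I would bridge this gap with a short induction on $|B \setminus A|$: the base case $|B \setminus A| = 1$ is exactly the rearranged statement of Property 2, and the inductive step peels off a single element $y \in B \setminus A$ and chains two applications of the property, one to $A \subseteq A \cup \{y\}$ and one to $A \cup \{y\} \subseteq B$. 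Once the diminishing-returns inequality is in hand, the telescoping bound yields $\sum_{k=1}^n \Delta_{j_k} = \sum_j s_j \Delta_j$, completing the proof. In the additive edge case of Property 2 every telescoping step becomes an equality, so the bound is tight, which justifies treating it as a close approximation in the algorithmic development that follows.
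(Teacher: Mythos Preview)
Your proposal is correct and supplies exactly the standard telescoping-plus-diminishing-returns argument one expects here. The paper itself does not give a proof of this proposition: it is stated as an immediate consequence of Property~2 (submodularity of $-\mathcal{C}$), with no further justification beyond the displayed chain of inequalities. Your write-up therefore fills in details the paper simply omits, and the approach is the canonical one.

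One small technical point worth tightening: in your diminishing-returns step you take $x=\boldsymbol{z}_{j_k}^+$, but in the paper's setup each $\boldsymbol{z}_j^+$ is a \emph{dataset} from device $j$, not necessarily a single ground-set element, whereas Property~2 is phrased for singletons $\boldsymbol{z}_1,\boldsymbol{z}_2$. This is not a real obstacle --- the set-level inequality $f(A\cup X)-f(A)\ge f(B\cup X)-f(B)$ for $A\subseteq B$ follows from the element-wise version by the same induction you already sketch, now over the elements of $X$ as well as of $B\setminus A$ --- but you may want to make that second layer of telescoping explicit so the argument matches the granularity of Property~2 as stated.
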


Instead of maximizing the total information utility gain, we maximize its upper bound and replace the objective function as (P2). This approach allows independent computation of the marginal information utility gain $\Delta _j$ for each device. Consequently, this allows parallel computing, reducing the complexity to a linear complexity characterized by $\sum\nolimits_j^{}{| \boldsymbol{z}_{j}^{+} |}$. 

If the system goal function does not satisfy either submodular property or additive property, the system goal function can be replaced with a 'surrogate objective', such as its upper or lower bound, which satisfies the \textit{Submodular} or \textit{Additive} property. Concrete examples will be provided in Section \ref{applications}.

\subsection{Causality Problem}
There are two methods to solve non-causal challenges.

\subsubsection{Large Payload Size Case}
Sometimes, the volume of data of $\boldsymbol{z}^+$ to be transmitted is large. For example, images and deep learning model gradients often have large volumes. In contrast, the marginal information utility comprises just a single numerical value. To manage this, EDs can first send the value of the marginal information utility gain to the server. The server operator then determines which EDs should transmit their full data. By using this method, the operator can address (P2) based on the provided utility values. 

Under these circumstances, extra communication overhead is required to transmit the value of the marginal information utility gain. Given that the volume is significantly smaller than $\boldsymbol{z}^+$, the additional communication overhead can be ignored. Furthermore, if feedback from the CPS operator to EDs is necessary to compute the marginal information utility gain, additional downlink channels are required. These downlink transmissions are typically broadcast messages and will not incur too much extra communication overhead. 

\subsubsection{Small Payload Size Case}
When the data payload is minimal, sending extra data can lead to added congestion in the wireless communication network. Under such circumstances, a practical solution involves sampling fresh data, $\boldsymbol{z}_+$, using historical data, $\boldsymbol{z}_{\mathrm{old}}$, under the assumption that $\boldsymbol{z}_+$ has the same probabilistic distribution as $\boldsymbol{z}_{\mathrm{old}}$. This method leads us to a new optimization problem (P3) where the information utility gain is switched to the expected marginal information utility gain:
\begin{equation}
    \begin{aligned}
        \left( \mathbf{P3} \right)~~~\max_{a_{i,j},s_j}\,\,\sum\nolimits_{j}^{}{\mathbb{E} _{\boldsymbol{z}_{\mathrm{old}}}\left[ s_j\Delta _j \right]},
    \end{aligned}
\end{equation}
where $\mathbb{E} _{\boldsymbol{z}_{\mathrm{old}}}[\cdot]$ represents the expected value based on old data, providing a prediction of potential information gain in upcoming data. This method tends to perform well when there is a strong temporal correlation between consecutive transmitted data.

\subsection{Greedy Solutions and Sub-optimality Guarantee}
Upon calculating the marginal information utility gain, problems (P2-P3) can be converted to the knapsack problem, a well-known combinatorial problem. The communication channel can be converted to a knapsack with a limited capacity of $J$ units. ED $j$ will utilize $w_j$ channel units if selected for the `knapsack', where $w_j$ represents the minimum channel units required to meet the constraint in Eq. \eqref{eq:rate}:
\begin{equation}
	w_j=\lceil \frac{r_{\min ,j}}{Bt\log_2\left( 1+\frac{g_jp_j}{B\sigma ^2} \right)} \rceil.
\end{equation}
Here, the ceiling function is preferred over the floor function to ensure the robustness of the resource allocation process. The benefit derived from ED $j$ corresponds to its marginal information utility gain, which is $\Delta _{j}\geqslant 0$. In summary, (P2-P3) can be transformed into the following problem (P4):
\begin{equation}
\begin{aligned}
	\left( \mathbf{P}\mathbf{4} \right)~~~~\max_{s_j}~~&\sum\nolimits_j^{}{\Delta _js_j}
	\\
	 \mathrm{s}.\mathrm{t}.~~&\sum\nolimits_j^{}{w_js_j\leqslant J},~s_j\in \left\{ 0,1 \right\} .
\end{aligned}
\end{equation}

The knapsack problem is widely recognized as NP-hard, resulting in significant computational overhead. To address this, we employ a greedy algorithm, depicted in Algorithm \ref{al:greedy_algorithm}. Despite its greedy nature, the algorithm's suboptimal loss can be bounded in relation to the theoretical optimum of (P4):
\begin{prop}
If $w_j \leqslant \eta J$  for every ED $j$, with $\eta \in\left(0, \frac{1}{2}\right]$, then the solution computed in Algorithm \ref{al:greedy_algorithm} guarantees at least $1-\eta$ of the optimal solution of (P2)-(P3).
\end{prop}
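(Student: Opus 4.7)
The plan is to treat problem (P4) as a standard 0-1 knapsack with capacity $J$, item weights $w_j$ and item values $\Delta_j$, and to argue that the greedy rule (sorting devices in non-increasing order of the value-to-weight ratio $\Delta_j/w_j$ and admitting them one at a time whenever the residual capacity permits) achieves a $(1-\eta)$ approximation under the size assumption $w_j\leq \eta J$. The main tool is the linear programming relaxation of (P4), for which the same ratio-sorted greedy is exact when a single fractional item is permitted.

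Concretely, I would fix notation first: let $S$ be the set selected by greedy, $g=\sum_{j\in S}\Delta_j$ its value and $W_g=\sum_{j\in S} w_j$ its weight, and let $j^\ast$ be the first device in the sorted order that is rejected because $W_g+w_{j^\ast}>J$. The LP relaxation of (P4) is solved by taking all of $S$ and a fraction $f=(J-W_g)/w_{j^\ast}\in[0,1)$ of $j^\ast$, giving $\mathrm{LP}^\ast=g+f\Delta_{j^\ast}$, and since the LP is a relaxation, $\mathrm{OPT}\leq \mathrm{LP}^\ast<g+\Delta_{j^\ast}$. Next I would use the greedy ordering: $\Delta_{j^\ast}/w_{j^\ast}\leq \Delta_j/w_j$ for every $j\in S$, so multiplying by $w_j$ and summing over $S$ yields $g\geq (\Delta_{j^\ast}/w_{j^\ast})W_g$, i.e., $\Delta_{j^\ast}\leq g\,w_{j^\ast}/W_g$. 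Finally, the size assumption gives $w_{j^\ast}\leq \eta J$ and $W_g>J-w_{j^\ast}\geq (1-\eta)J$, so $w_{j^\ast}/W_g<\eta/(1-\eta)$, and combining with the LP bound produces $\mathrm{OPT}<g\,(1+w_{j^\ast}/W_g)<g/(1-\eta)$, i.e., $g>(1-\eta)\mathrm{OPT}$, which is the claim.

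The main obstacle is really just the LP-relaxation step, namely recognizing that fractional knapsack is solved exactly by the ratio-sorted greedy so that the rejected item gives the clean upper bound $\mathrm{OPT}\leq g+\Delta_{j^\ast}$; once this is in place, the hypothesis $w_j\leq \eta J$ plugs into the elementary ratio inequality to produce the $(1-\eta)$ factor automatically. A minor point to verify is that Algorithm~\ref{al:greedy_algorithm} indeed processes devices in ratio-sorted order and simply skips an item on overflow rather than executing a more elaborate swap variant; if it continues past $j^\ast$ to try smaller items, the same argument still applies because the bound only relies on $W_g>J-w_{j^\ast}$, which is preserved as additional devices are packed.
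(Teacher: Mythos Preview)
Your argument is correct and is precisely the standard LP-relaxation proof for the ratio-greedy knapsack guarantee; the paper itself does not reproduce a proof but simply cites Section~2.2 of Roughgarden's \emph{Twenty Lectures on Algorithmic Game Theory}, which gives exactly this fractional-knapsack bound $\mathrm{OPT}\leq g+\Delta_{j^\ast}$ together with the small-item hypothesis to extract the $(1-\eta)$ factor. One small remark: since Algorithm~\ref{al:greedy_algorithm} explicitly halts at the first overflow, your final paragraph about continuing past $j^\ast$ is unnecessary here (and the justification there would need to route through $g_0$ and $W_{g_0}$ rather than $g$ and $W_g$, since items after $j^\ast$ have smaller ratios), but this does not affect the main argument.
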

The proof is provided in Section 2.2 of Ref.\cite{roughgarden2016twenty}. The assumption $w_j \leqslant \eta J$ indicates that a single ED will not consume too many RBs. The most complex part of the greedy solution method is to sort the EDs with respect to $\frac{\Delta_j}{w_j}$. Utilizing an efficient sorting algorithm results in a time complexity of $O(J\mathrm{log}J)$, allowing for swift execution on edge servers.

In summary, with all these techniques, the complete goal-oriented RBA process is shown in Algorithms \ref{al:embb} and \ref{al:mmtc}, for large-size payload and small-size payload, respectively. The entire process is designed to run in parallel. Importantly, the algorithm is tailored to seamlessly integrate with today's cellular communication protocols. Currently, channel allocations are based on channel qualities $g_j$ detected using pilot frequencies. 
\begin{algorithm}[t] 
	\caption{Hybrid Allocation Rule}\label{al:greedy_algorithm}
	\SetAlgoLined
	\KwInput {$w_j$,$\Delta _j$}
	\textbf{Step~1:} Sort and re-index the EDs so that: 
	$$
	\frac{\Delta_1}{w_1} \geq \frac{\Delta_2}{w_2} \geq \cdots \geq \frac{\Delta_J}{w_J} .
	$$\\
	\textbf{Step~2:} Pick EDs in this order until the channel budget is used up, and then halt.\\
	\KwResult {$j\in \{ j:s_j=1 \}$}
\end{algorithm}
\begin{algorithm}[t] 
	\caption{Goal-oriented RBA (Large Payloads)}\label{al:embb}
	\SetAlgoLined
	\textbf{Parallel~Thread~1:} Using pilot frequencies to identify channel gain $g_j$, and compute $w_j$.\\
	\textbf{Parallel~Thread~2:} \\
	\textit{Information Utility Update}: All EDs upload $\Delta _j$.\\
	\textit{Channel Allocations}: The operator employs Algorithm \ref{al:greedy_algorithm} to allocate channels.\\	
	\textit{Application}: Execute system applications using new data $\boldsymbol{z}_{\mathrm{new}}=\boldsymbol{z}_{\mathrm{old}}\cup \left\{ \bigcup{s_j\boldsymbol{z}_{j}^{+}} \right\}$ into database.\\
	\textit{Update}: Update dataset $\boldsymbol{z}_{\mathrm{old}} \gets \boldsymbol{z}_{\mathrm{new}}$.
\end{algorithm}
\begin{algorithm}[t] 
	\caption{Goal-oriented RBA (Small Payloads)}\label{al:mmtc}
	\SetAlgoLined
	\textbf{Data:} {$\boldsymbol{z}_{\mathrm{old}}$}\\
	\textbf{Parallel~Thread~1:} Using pilot frequencies to identify channel gain $g_j$, and compute $w_j$.\\
	\textbf{Parallel~Thread~2:} Using existing dataset to compute $\Delta _j$ for each $j$.\\
	\textbf{Parallel~Thread~3:}\\
	\textit{Channel Allocations}: Employ Algorithm \ref{al:greedy_algorithm} to allocate channels.\\
	\textit{Application}: Execute system applications using new data $\boldsymbol{z}_{\mathrm{new}}=\boldsymbol{z}_{\mathrm{old}}\cup \left\{ \bigcup{s_j\boldsymbol{z}_{j}^{+}} \right\}$ into database.\\
	\textit{Update}: Update dataset $\boldsymbol{z}_{\mathrm{old}} \gets \boldsymbol{z}_{\mathrm{new}}$.
\end{algorithm}

\section{Application-Specific Designs} \label{applications}
\begin{table*}[t]
	\centering
	\caption{Application Scenario Summary}
	\label{tab:summary}
	\begin{tblr}{
			width = \linewidth,
			colspec = {Q[80]Q[100]Q[120]Q[230]Q[200]Q[190]},
			row{1} = {c},
			cell{1}{1} = {c=2}{0.138\linewidth},
			cell{2}{1} = {r=2}{c},
			cell{4}{1} = {r=2}{c},
			vline{2} = {2,3,4,5}{},
			hline{1-2,4,6} = {-}{},
			hline{3,5} = {2-6}{},
		}
		Transmission Data $\boldsymbol{z}^+$ Type &  & Application~ & {System Goal \\$\mathcal{C} \left( \boldsymbol{z} \right)$} & {Information Utility Gain \\$\mathcal{C} \left( \boldsymbol{z}_{\mathrm{old}} \right) - \mathcal{C} \left( \boldsymbol{z}_{\mathrm{new}} \right)$} & {Approximate Marginal\\Information Utility Gain $\Delta _j$}\\
		{Raw Data \\Transmission} & {Sensor Data} & {Data-driven \\Decision Making} & {Decision Cost\\ $\min_{\boldsymbol{\pi }\in \Omega} \max_{\boldsymbol{\xi }\in \Xi (\boldsymbol{z})} \,\,h(\boldsymbol{\xi};\boldsymbol{\pi })$} & {Decision Cost Reduction \\(Proposition \ref{data})} & {Marginal Cost Reduction \\(Proposition \ref{data_m})}\\
		& Data Samples & {Edge \\Learning} & {Training Loss\\$\min_{\boldsymbol{\theta }} \,L(\boldsymbol{z};\boldsymbol{\theta })$} & {Training Loss Reduction \\(Proposition \ref{edge})} & {Individual Inference Error\\ (Proposition \ref{edge_m})} \\
		{Intermediate Data\\Transmission} & Gradients & Federated Learning & {Training Loss~~\\$\min_{\boldsymbol{\theta }} \,L(\mathsf{z};\boldsymbol{\theta })$} & {Loss Reduction via Gradient \\Descent (Proposition \ref{federated})} & {Individual Gradient Norm\\ (Proposition \ref{federated_m})}\\
		& {Primal \& \\Dual Variables} & {Distributed\\Optimization} & {Loss \& Cost\\$\min_{\boldsymbol{\theta }_j=\boldsymbol{\theta }_0} \,\sum\nolimits_j{L_j(\boldsymbol{\theta }_j;\mathsf{z}_j)}$} & {Lagrangian Reduction via \\ADMM (Proposition \ref{distributed})} & {Primal Variable Changes\\ (Proposition \ref{distributed_m})}
	\end{tblr}
\end{table*}
To clarify the implementation of the goal-oriented RBA within CPSs, we will refer to four typical applications, as outlined in Table \ref{tab:summary} and Fig. \ref{fig:commu}.

\subsection{Goal-oriented RBA for Raw Data Transmission}
\begin{figure*}[t]
	\centering
	\includegraphics[width=0.95\textwidth]{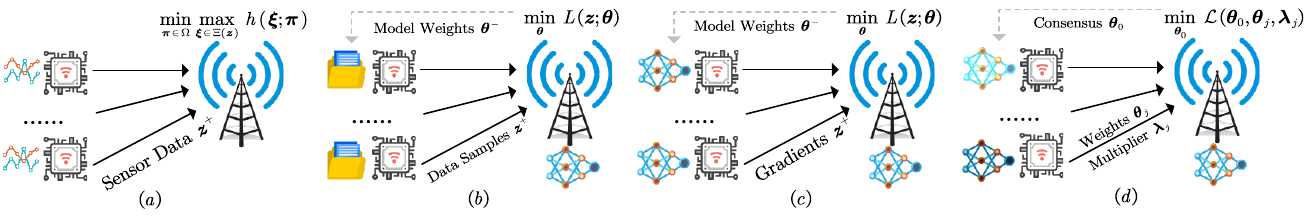}
	\caption{The communication paradigm for (a) data-driven decision making; (b) edge learning; (c) federated learning; (d) distributed optimization.}
	\label{fig:commu}
\end{figure*}
This subsection introduces two essential example applications on the RBA problem: data-driven decision-making and edge learning. They are either required to connect with large amounts of EDs or transmit substantial raw data samples.
\subsubsection{Data-driven Decision Making}
In data-driven decision-making, the CPS operator seeks to use real-time data as input parameters and boundary conditions to address its decision-making problem, which can be formulated as follows:
\begin{equation}\label{eq:decision}
\mathcal{C} \left( \boldsymbol{z} \right) =\min_{\boldsymbol{\pi }\in \Omega} \max_{\boldsymbol{\xi }\in \Xi (\boldsymbol{z})} \,\,h(\boldsymbol{\xi};\boldsymbol{\pi }).
\end{equation}
where $\boldsymbol{\pi}$ denotes the decisions of the CPS operator, constrained by the action set $\boldsymbol{\pi }\in \Omega$. The CPS operator decides its robust strategy decisions $\boldsymbol{\pi}$, according to the information given by the data $\boldsymbol{\xi }\in \Xi (\boldsymbol{z})$ at hand. Two simple examples are illustrated below to explain the meaning of Eq. \eqref{eq:decision}, one for smart grids and one for vehicle networks.

\textit{Example - Emergency demand response}:

In the context of smart grids, during incidents such as abrupt shutdown of power generators, the operator (often called the aggregator) in cities has to implement load shedding to prevent wider cascaded power grid failures. The operator must meet the load-shedding requirements while minimizing adverse economic costs. The objective function $h(\boldsymbol{\xi};\boldsymbol{\pi })$ is to minimize the cumulative economic cost associated with load shedding:
\begin{equation}\label{eq:edr1}
	\min_{\boldsymbol{\pi }\in \Omega} \max_{\boldsymbol{\xi }\in \Xi (\boldsymbol{z})} \,\,h(\boldsymbol{\xi};\boldsymbol{\pi })=\min_{\boldsymbol{\pi }\in \Omega}\max_{\boldsymbol{\xi }\in \Xi (\boldsymbol{z})} \,\,\sum\nolimits_j^{}{c_j\pi_j},
\end{equation}
where $\pi_j$ is the decision variable, indicating the load reduction amount (kW) of ED $j$; $c_j$ denotes the unit cost for instructing ED $j$ to execute a reduction. 

The action space for load reduction is articulated as follows:
\begin{equation}
	\Omega :\sum\nolimits_j^{}{\pi_j}\geqslant \pi_{\min},~~\pi_j\leqslant \xi _j,
\end{equation}
where the first constraint ensures that the total load reduction exceeds the minimum threshold $\pi_{\min}$; the second constraint confirms that load reduction for ED $j$ cannot be greater than its real-time load usage $\xi_j$.

The real-time load usage $\xi_j$ is  estimated either from the historical dataset or the real-time transmitting data $\boldsymbol{z}_j^+$:
\begin{equation}\label{eq:data_load}
	\Xi (\boldsymbol{z}): \begin{cases}
		\xi_j \stackrel{\text{d}}{=} \boldsymbol{z}_{\mathrm{old}},&s_j=0\\
		\xi_j=\boldsymbol{z}_j^+,&s_j=1
	\end{cases},
\end{equation}
where $\stackrel{\text{d}}{=}$ means the probability distribution on the two sides of equation is the same.

Without real-time data ($s_j=0$, the first scenario of Eq. \eqref{eq:data_load}), the operator resorts to estimating the available reduction load using historical data, assuming that $\xi_j$ follows the same distribution as $\boldsymbol{z}_{\mathrm{old}}$ (). Alternatively, if real-time data is accessible ($s_j=1$, the second scenario), the operator uses the measured real-time load data $z_j$ transmitted through ED $j$.\par
\textit{Example - Real-time vehicle routing}:\par
Under this setting, the CPS operator can be a delivery platform. It collects real-time traveling time data of each road segment within a city district and then plans routing paths for vehicles. The objective function $h(\boldsymbol{\xi};\boldsymbol{\pi })$ minimizes the total travel time to reach a destination:
\begin{equation}
\min_{\boldsymbol{\pi }\in \Omega} \max_{\boldsymbol{\xi }\in \Xi (\boldsymbol{z})} \,\,h(\boldsymbol{\xi};\boldsymbol{\pi })=\min_{\boldsymbol{\pi }\in \Omega} \max_{\boldsymbol{\xi }\in \Xi (\boldsymbol{z})} \sum_{mn}{\xi _{mn}}\pi_{mn},
\end{equation}
where $m$ and $n$ denote different road network nodes; $mn$ denotes the road interconnecting node $m$ and $n$; $\xi _{mn}$ is the estimated travel time for road segment $mn$; $\pi_{mn}$ is the binary decision variable indicating whether the road $mn$ is selected in the routing path. 

Traveling time, $\xi _{mn}$, is inferred  either from the historical dataset or the real-time transmitting data:
\begin{equation} \label{eq:data_vehicle}
\Xi (\boldsymbol{z}): \begin{cases}
	\xi _{mn}\stackrel{\text{d}}{=}\boldsymbol{z}_{\mathrm{old}},&s_j=0\\
	\xi _{mn}=z_{mn}^+,&s_j=1
\end{cases}.
\end{equation}
Lacking real-time data ($s_j=0$, the first scenario of Eq. \eqref{eq:data_vehicle}), the operator resorts to estimating the travel time using historical data. On the contrary, when real-time data is accessible ($s_j=1$, the second scenario), the operator directly employs the measured real-time travel time $z_{mn}$, transmitted by ED $j$. 

Vehicle routing conforms to road network flow balance constraints:
\begin{equation}
    \Omega :\begin{cases}
	\sum_n^{}{\pi_{mn}}-\sum_n^{}{\pi_{nm}}=b_m, &\forall m\in \mathcal{N}\\
	\pi_{mn}\in \{0,1\}, &\forall mn\in \mathcal{R}
\end{cases},
\end{equation}
where $b_m=1$ for source nodes, $b_m=-1$ for destination nodes, and $b_m=0$ for all other nodes.

As illustrated in Eqs. \eqref{eq:data_load} and \eqref{eq:data_vehicle}, the availability of real-time data $\boldsymbol{z}^+$ influences the actions of the CPS operator $\boldsymbol{\pi}$. Consequently, the communication system, which serves as the data pipeline, inherently affects the optimality of the decision. Data crucial to decisions should be prioritized. 
\begin{prop}\label{data}
    The total information utility gain for data-driven decision-making problems is equivalent to the decision cost reduction:
    \begin{equation}
    \begin{aligned}
    &\mathcal{C} \left( \boldsymbol{z}_{\mathrm{old}} \right) - \mathcal{C} \left( \boldsymbol{z}_{\mathrm{new}} \right) 
    \\
    =& \min_{\boldsymbol{\pi }\in \Omega} \max_{\boldsymbol{\xi }\in \Xi (\boldsymbol{z}_{\mathrm{old}})} \,\,h(\boldsymbol{\xi };\boldsymbol{\pi })-\min_{\boldsymbol{\pi }\in \Omega} \max_{\boldsymbol{\xi }\in \Xi (\boldsymbol{z}_{\mathrm{new}})} \,\,h(\boldsymbol{\xi };\boldsymbol{\pi }).
    \end{aligned}
    \end{equation}
\end{prop}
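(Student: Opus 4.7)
The plan is to treat the proposition as a direct unfolding of the definition of $\mathcal{C}(\boldsymbol{z})$ given in Eq. \eqref{eq:decision}, followed by a verification that this quantity is consistent with Property 1 (information positivity). First I would substitute the definition $\mathcal{C}(\boldsymbol{z}) = \min_{\boldsymbol{\pi} \in \Omega} \max_{\boldsymbol{\xi} \in \Xi(\boldsymbol{z})} h(\boldsymbol{\xi}; \boldsymbol{\pi})$ separately at $\boldsymbol{z} = \boldsymbol{z}_{\mathrm{old}}$ and at $\boldsymbol{z} = \boldsymbol{z}_{\mathrm{new}}$, and observe that the claimed equality is immediate. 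No combinatorial manipulation of the scheduling variables $s_j$ is needed here; the $s_j$ enter only through how $\boldsymbol{z}_{\mathrm{new}}$ is formed from $\boldsymbol{z}_{\mathrm{old}}$ via Eq. \eqref{eq:data_collection}.

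Next I would confirm that the difference is non-negative, which is what justifies calling it a ``gain''. The plan is to show that enlarging the dataset can only shrink the uncertainty set, i.e.\ $\Xi(\boldsymbol{z}_{\mathrm{new}}) \subseteq \Xi(\boldsymbol{z}_{\mathrm{old}})$, using the explicit definitions in Eqs. \eqref{eq:data_load} and \eqref{eq:data_vehicle}. Intuitively, for any scheduled ED with $s_j = 1$ the corresponding component of $\boldsymbol{\xi}$ collapses from a distributional support to a point mass, while for unscheduled EDs the uncertainty description is unchanged. With monotone inclusion in hand, I would then pass through the inner maximum: for every fixed $\boldsymbol{\pi}$,
\begin{equation}
\max_{\boldsymbol{\xi} \in \Xi(\boldsymbol{z}_{\mathrm{new}})} h(\boldsymbol{\xi}; \boldsymbol{\pi}) \;\leqslant\; \max_{\boldsymbol{\xi} \in \Xi(\boldsymbol{z}_{\mathrm{old}})} h(\boldsymbol{\xi}; \boldsymbol{\pi}),
\end{equation}
and the outer minimization over $\boldsymbol{\pi} \in \Omega$ preserves this inequality, yielding $\mathcal{C}(\boldsymbol{z}_{\mathrm{new}}) \leqslant \mathcal{C}(\boldsymbol{z}_{\mathrm{old}})$. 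This closes the consistency check with Property 1 and confirms the non-negativity of each term $\Delta_j$ referenced later in Problem (P2).

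The main, and essentially only, conceptual obstacle is making precise the set-inclusion $\Xi(\boldsymbol{z}_{\mathrm{new}}) \subseteq \Xi(\boldsymbol{z}_{\mathrm{old}})$ when $\Xi(\cdot)$ is described distributionally (the $s_j = 0$ branches of Eqs. \eqref{eq:data_load} and \eqref{eq:data_vehicle} involve a distributional equality $\xi_j \stackrel{\text{d}}{=} \boldsymbol{z}_{\mathrm{old}}$, whereas the $s_j = 1$ branches use a deterministic value). I would handle this by adopting the worst-case-over-support interpretation of the inner $\max_{\boldsymbol{\xi}}$, so that the distributional specification reduces to a set specification on the support, and the replacement of a support by a singleton is literal set shrinkage. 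Everything else is a one-line substitution, so aside from fixing this interpretive convention the proof is essentially immediate.
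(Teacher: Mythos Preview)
Your proposal is correct. The paper does not provide an explicit proof of this proposition at all; it is stated as an immediate consequence of the definition $\mathcal{C}(\boldsymbol{z}) = \min_{\boldsymbol{\pi}\in\Omega}\max_{\boldsymbol{\xi}\in\Xi(\boldsymbol{z})} h(\boldsymbol{\xi};\boldsymbol{\pi})$ in Eq.~\eqref{eq:decision}, which is exactly your first step. Your additional verification of Property~1 via the set inclusion $\Xi(\boldsymbol{z}_{\mathrm{new}})\subseteq\Xi(\boldsymbol{z}_{\mathrm{old}})$ and monotonicity of the min--max goes beyond what the paper supplies, but it is a sensible and correct addendum that the paper leaves implicit.
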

\begin{prop}\label{data_m}
    The marginal information utility gain $\Delta _j$ is the difference of decision costs between scenarios with and without accurate data from ED $j$:
    \begin{equation} \label{eq:margin1}
    \Delta _j=\min_{\boldsymbol{\pi }\in \Omega} \max_{\boldsymbol{\xi }\in \Xi (\boldsymbol{z}_{\mathrm{old}})} \,\,h(\boldsymbol{\xi };\boldsymbol{\pi })-\min_{\boldsymbol{\pi }\in \Omega} \max_{
    	\boldsymbol{\xi }_j=\boldsymbol{z}_j 
            \atop
    	\boldsymbol{\xi }_{j'}\in \Xi (\boldsymbol{z}_{\mathrm{old}}), j'\ne j} \,\,h(\boldsymbol{\xi };\boldsymbol{\pi }).
    \end{equation} 
\end{prop}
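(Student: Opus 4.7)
The plan is to reduce the statement to a direct substitution into the general definition of the marginal information utility gain, once we correctly characterize how the uncertainty set $\Xi(\boldsymbol{z})$ decomposes across end devices. Recall from Proposition~1 that the marginal information utility gain for ED $j$ is, by definition,
\begin{equation}
\Delta_j \;=\; \mathcal{C}\!\left(\boldsymbol{z}_{\mathrm{old}}\right) - \mathcal{C}\!\left(\boldsymbol{z}_{\mathrm{old}} \cup \{\boldsymbol{z}_j^+\}\right),
\end{equation}
and for the data-driven decision problem the system goal function is given by Eq.~\eqref{eq:decision}. Substituting this form of $\mathcal{C}$ into the two terms above immediately yields the first and second summands on the right-hand side of Eq.~\eqref{eq:margin1}, provided we can show that $\Xi(\boldsymbol{z}_{\mathrm{old}} \cup \{\boldsymbol{z}_j^+\})$ is exactly the set in which $\boldsymbol{\xi}_j=\boldsymbol{z}_j$ is pinned down while $\boldsymbol{\xi}_{j'}\in\Xi(\boldsymbol{z}_{\mathrm{old}})$ for every $j'\neq j$.

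First, I would make explicit the componentwise structure of $\Xi$ that is already implicit in Eqs.~\eqref{eq:data_load} and~\eqref{eq:data_vehicle}: for each ED $j$, the uncertain parameter $\boldsymbol{\xi}_j$ either (i) equals the freshly transmitted value $\boldsymbol{z}_j^+$ when $s_j=1$, or (ii) is estimated from historical data $\boldsymbol{z}_{\mathrm{old}}$ when $s_j=0$. Thus $\Xi(\boldsymbol{z})$ factorizes per device as $\Xi(\boldsymbol{z}) = \prod_{j'} \Xi_{j'}(\boldsymbol{z})$, with $\Xi_{j'}(\boldsymbol{z})$ depending only on the portion of $\boldsymbol{z}$ pertaining to ED $j'$. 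Next I would apply this factorization to the two datasets appearing in the definition of $\Delta_j$: for $\boldsymbol{z}_{\mathrm{old}}$ alone, every component $\Xi_{j'}$ is estimated from history; for $\boldsymbol{z}_{\mathrm{old}} \cup \{\boldsymbol{z}_j^+\}$, the $j$-th component collapses to the singleton $\{\boldsymbol{z}_j\}$, while all other components $j'\neq j$ are unchanged and equal $\Xi_{j'}(\boldsymbol{z}_{\mathrm{old}})$. Plugging this description into the inner maximization gives exactly the constraint set written under the second $\max$ in Eq.~\eqref{eq:margin1}, and the proof is complete.

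The main obstacle I anticipate is not algebraic but conceptual: justifying the per-device factorization of $\Xi$. In general, if the uncertain quantities of different EDs were statistically coupled (e.g., correlated road segments or correlated loads), knowing $\boldsymbol{z}_j^+$ would also tighten $\Xi_{j'}$ for $j'\neq j$, and the equality in Eq.~\eqref{eq:margin1} would degenerate into an inequality. I would therefore explicitly invoke the independence/separability assumption encoded in Eqs.~\eqref{eq:data_load}--\eqref{eq:data_vehicle}, under which knowledge of ED $j$'s data does not refine the uncertainty of any other ED, and state that Eq.~\eqref{eq:margin1} holds exactly in this case and as an upper bound on $\Delta_j$ in general. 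Once this assumption is in place, the remainder of the proof is a one-line substitution using Proposition~1 and Eq.~\eqref{eq:decision}.
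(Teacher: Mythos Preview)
Your proposal is correct and aligns with the paper's approach: the paper offers no standalone proof of this proposition, treating it as an immediate specialization of the general definition of $\Delta_j$ in Proposition~1 to the system goal function of Eq.~\eqref{eq:decision}, together with the per-device structure of $\Xi$ in Eqs.~\eqref{eq:data_load}--\eqref{eq:data_vehicle}. Your write-up is in fact more explicit than the paper's, since you spell out the componentwise factorization of $\Xi$ and flag the independence caveat, whereas the paper leaves both implicit.
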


\subsubsection{Edge Learning}
In the context of edge learning, the operator deploys certain learning tasks on the edge server. The edge server collects raw data samples from different EDs and then accomplishes the learning tasks. Typical examples include training neural networks devised for recognizing road obstacles in vehicle networks and classifying cloud images for solar power prediction in smart grids. 

Suppose that ED $j$ has data samples $d=1,..,D_j$. The training data samples $\boldsymbol{z}_{j,d}$ can be further divided into $\boldsymbol{z}_{j,d}=\left( \boldsymbol{x}_{j,d},\boldsymbol{y}_{j,d}\right)$, where $\boldsymbol{x}_{j,d}$ is the input and $\boldsymbol{y}_{j,d}$ is the corresponding output. The learning model is expressed as $\hat{\boldsymbol{y}}_{j,d}=f\left( \boldsymbol{x}_{j,d};\boldsymbol{\theta } \right)$ where $\boldsymbol{\theta }$ denotes the model weights, and $\hat{\boldsymbol{y}}_{j,d}$ is the predicted output produced by the learned model. The operator aims to minimize the loss function for training the model as follows:  
\begin{equation} \label{eq:learn_obj}
\mathcal{C} \left( \boldsymbol{z} \right) =\min_{\boldsymbol{\theta }} \,L(\boldsymbol{z};\boldsymbol{\theta })=\min_{\boldsymbol{\theta }} \,\,\frac{1}{\sum_{j}^{}{D_j}}\sum_{j,d}^{}{{\ell \left( \boldsymbol{z}_{j,d};\boldsymbol{\theta } \right)}}.
\end{equation}

After collecting new data $\boldsymbol{z}_{\mathrm{new}}$, the CPS operator directs the edge server to update the model weights $\boldsymbol{\theta}$. Denote the model weights before and after updates as $\boldsymbol{\theta}^{-}$ and $\boldsymbol{\theta}^{+}$, respectively:
\begin{equation} \label{eq:model_weights}
	\boldsymbol{\theta }^-=\mathrm{arg} \min _{\boldsymbol{\theta }}\,\,L(\boldsymbol{z}_{\mathrm{old}};\boldsymbol{\theta }),~ \boldsymbol{\theta }^+=\mathrm{arg} \min _{\boldsymbol{\theta }}\,\,L(\boldsymbol{z}_{\mathrm{new}};\boldsymbol{\theta }). 
\end{equation}

The goal of the CPS operator is to minimize the loss $L$ across all data samples. Given that test data is expected to mirror the distribution of the training data set, a lower loss function value typically indicates higher model accuracy.
\begin{prop}\label{edge}
	The information utility gain for edge learning equals to the reduction in the learning model's loss function for all data samples:
	\begin{equation}\label{eq:prop2}
		\begin{aligned}
			&\mathcal{C} \left( \boldsymbol{z}_{\mathrm{old}} \right) -\mathcal{C} \left( \boldsymbol{z}_{\mathrm{new}} \right) \\
			&=\frac{1}{\sum_j^{}{D_j}}\left(\sum_{j,d}^{}{{\ell \left( \boldsymbol{z}_{j,d};\boldsymbol{\theta }^- \right)}}-\sum_{j,d}^{}{{\ell \left( \boldsymbol{z}_{j,d};\boldsymbol{\theta }^+ \right)}}\right),
		\end{aligned}
	\end{equation}
	where the model weights $\boldsymbol{\theta}^{-}$ and $\boldsymbol{\theta}^{+}$ are determined by the collected data via Eqs. \eqref{eq:model_weights}.
\end{prop}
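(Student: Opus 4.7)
The plan is to derive the identity in \eqref{eq:prop2} as a direct consequence of unfolding the definition of the system goal function $\mathcal{C}(\boldsymbol{z})$ and invoking the optimality of the trained weights. First, from \eqref{eq:learn_obj} we have $\mathcal{C}(\boldsymbol{z}) = \min_{\boldsymbol{\theta}} L(\boldsymbol{z}; \boldsymbol{\theta})$, and from \eqref{eq:model_weights} the arg-min quantities $\boldsymbol{\theta}^-$ and $\boldsymbol{\theta}^+$ are precisely the minimizers corresponding to the old and new datasets. Substituting these minimizers back yields the pointwise identities $\mathcal{C}(\boldsymbol{z}_{\mathrm{old}}) = L(\boldsymbol{z}_{\mathrm{old}}; \boldsymbol{\theta}^-)$ and $\mathcal{C}(\boldsymbol{z}_{\mathrm{new}}) = L(\boldsymbol{z}_{\mathrm{new}}; \boldsymbol{\theta}^+)$, which together convert the information utility gain into a difference of two empirical loss values.

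Second, I would expand each $L$ using its finite-sum form from \eqref{eq:learn_obj}, writing it as $\tfrac{1}{\sum_j D_j}\sum_{j,d} \ell(\boldsymbol{z}_{j,d}; \boldsymbol{\theta})$ over the sample indices of the appropriate dataset. Subtracting the two expansions and factoring the common normalization produces the right-hand side of \eqref{eq:prop2}. This step is purely bookkeeping; no nontrivial manipulation of $\ell$ or of the inner optimization is needed once the arg-min substitution has discharged the outer minimization.

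The main subtlety — and the place I would be most careful in the final write-up — is the reconciliation of summation ranges implicit in the compact notation on the right-hand side. If both $\sum_{j,d}$ are read as ranging over the enlarged dataset $\boldsymbol{z}_{\mathrm{new}}$, so that only the weights $\boldsymbol{\theta}^-$ versus $\boldsymbol{\theta}^+$ distinguish the two terms, then the identity requires the auxiliary equality $L(\boldsymbol{z}_{\mathrm{old}}; \boldsymbol{\theta}^-) = L(\boldsymbol{z}_{\mathrm{new}}; \boldsymbol{\theta}^-)$, which I would discharge by the standard i.i.d.\ assumption on the training samples so that both empirical averages agree in expectation with the population risk of $\boldsymbol{\theta}^-$. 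Under the alternative reading where each sum ranges over the dataset that matches its companion $\boldsymbol{\theta}$ and $\sum_j D_j$ adjusts accordingly, the identity holds as an exact tautology from the two substitutions above. I would state explicitly which reading is adopted at the top of the proof, after which only direct substitution remains.
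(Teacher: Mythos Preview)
Your approach is correct and matches the paper's treatment: the proposition is presented there as an immediate restatement of the definitions \eqref{eq:learn_obj}--\eqref{eq:model_weights}, with no separate proof. The approximation steps that follow \eqref{eq:prop2} confirm your first reading (both sums range over all $(j,d)$); the paper simply takes the full-sample loss evaluated at the trained weights as the operative meaning of $\mathcal{C}(\boldsymbol{z})$, so the index-set reconciliation you flag is absorbed into the definition rather than discharged via an i.i.d.\ argument.
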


To precisely compute the information utility gain Eq. \eqref{eq:prop2}, the CPS operator must re-train the learning model to update the weights $\boldsymbol{\theta}^+$. Unfortunately, the CPS operator cannot precisely compute Eq. \eqref{eq:prop2} before it collects new data $\boldsymbol{z}_{\mathrm{new}}$. To solve the issue, the following approximation method is proposed:
\begin{align}
	&\mathcal{C} \left( \boldsymbol{z}_{\mathrm{old}} \right) -\mathcal{C} \left( \boldsymbol{z}_{\mathrm{new}} \right)
	\\
	 \approx&\frac{1}{\sum_j^{}{D_j}}\left( \textstyle\sum_{\{ \left. j,d :\boldsymbol{z}_{j,d}\notin \boldsymbol{z}_{\mathrm{old}} \right\}}^{}{\ell \left( \boldsymbol{z}_{j,d};\boldsymbol{\theta }^- \right)}-\right. \nonumber
	 \\
	 &~~~~~~~~~~\left.\textstyle\sum_{\left\{  j,d :\boldsymbol{z}_{j,d}\notin \boldsymbol{z}_{\mathrm{new}} \right\}}^{}{\ell \left( \boldsymbol{z}_{j,d};\boldsymbol{\theta }^+ \right)} \right) 
	\\
	\approx&\frac{1}{\sum_j^{}{D_j}}\left( \sum\textstyle_{\left\{ j,d:\boldsymbol{z}_{j,d}\in \boldsymbol{z}_{\mathrm{new}}\cap \boldsymbol{z}_{j,d}\notin \boldsymbol{z}_{\mathrm{old}} \right\}}^{}{\ell \left( \boldsymbol{z}_{j,d};\boldsymbol{\theta }^- \right)} \right).
\end{align}

The first approximation is based on the fact that, if the CPS operator knows the exact value of $\boldsymbol{z}_{j,d}$ and includes it in the training set, the model's loss $\ell \left( \boldsymbol{z}_{j,d};\boldsymbol{\theta } \right)$ for query data $\boldsymbol{z}_{j,d}$ after training will be small and thus can be approximated as zero.  The major loss arises from unknown data patterns, namely data $\left\{ j,d :\boldsymbol{z}_{j,d}\notin \boldsymbol{z}_{\mathrm{old}} \right\} $ for the old model $\boldsymbol{\theta}^-$ and data $\left\{  j,d:\boldsymbol{z}_{j,d}\notin \boldsymbol{z}_{\mathrm{new}} \right\} $ for the updated model $\boldsymbol{\theta}^+$.  

The second approximation follows that the major gains come from the data patterns present in the new dataset but not in the old one ($\left\{ j,d :\boldsymbol{z}_{j,d}\in \boldsymbol{z}_{\mathrm{new}}\cap \boldsymbol{z}_{j,d}\notin \boldsymbol{z}_{\mathrm{old}} \right\} $) seen by the old model $\boldsymbol{\theta}^-$. For such data,  the loss $\ell \left( \boldsymbol{z}_{j,d};\boldsymbol{\theta }^+ \right)$ is approximately 0 while $\ell \left( \boldsymbol{z}_{j,d};\boldsymbol{\theta }^- \right)$ remains nonzero. Consequently, the marginal information utility gain for each ED can be approximated as follows: 
\begin{prop}\label{edge_m}
	The marginal information utility gain for edge learning is the loss of the existing learning model with respect to new data samples:
	\begin{equation}\label{eq:margin2}
		\Delta _{j,d}=\ell \left( \boldsymbol{z}_{j,d};\boldsymbol{\theta }^- \right).
	\end{equation} 
\end{prop}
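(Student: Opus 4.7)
The plan is to read off $\Delta_{j,d}$ by plugging the single-sample increment into the expression for $\mathcal{C}(\boldsymbol{z}_{\mathrm{old}})-\mathcal{C}(\boldsymbol{z}_{\mathrm{new}})$ already established in Proposition \ref{edge}, and then invoking the two approximation steps spelled out in the paragraphs preceding this proposition. Concretely, I would start from the submodular decomposition of Proposition 1, which says the per-sample marginal is
$$\Delta_{j,d} = \mathcal{C}(\boldsymbol{z}_{\mathrm{old}}) - \mathcal{C}\bigl(\boldsymbol{z}_{\mathrm{old}} \cup \{\boldsymbol{z}_{j,d}\}\bigr),$$
and expand both terms via Eq. \eqref{eq:prop2}, letting $\boldsymbol{\theta}^{-}$ be the minimizer on $\boldsymbol{z}_{\mathrm{old}}$ and $\tilde{\boldsymbol{\theta}}$ the minimizer on $\boldsymbol{z}_{\mathrm{old}} \cup \{\boldsymbol{z}_{j,d}\}$, so the marginal becomes proportional to $\sum_{j',d'}\ell(\boldsymbol{z}_{j',d'};\boldsymbol{\theta}^{-}) - \sum_{j',d'}\ell(\boldsymbol{z}_{j',d'};\tilde{\boldsymbol{\theta}})$.

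Next I would apply the first approximation already used in the derivation above Eq. \eqref{eq:margin2}: after training, the empirical loss on any in-sample point is negligible, so each sum collapses to the losses on samples \emph{outside} the respective training set. Under this step the two sums over $\{j',d':\boldsymbol{z}_{j',d'}\notin \boldsymbol{z}_{\mathrm{old}}\}$ and $\{j',d':\boldsymbol{z}_{j',d'}\notin \boldsymbol{z}_{\mathrm{old}}\cup\{\boldsymbol{z}_{j,d}\}\}$ differ only by the single sample $\boldsymbol{z}_{j,d}$. The second approximation — that the retrained model $\tilde{\boldsymbol{\theta}}$ drives the loss at the newly included point to zero, while leaving the loss at all other not-yet-seen samples essentially unchanged — makes every other term cancel. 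What remains is $\ell(\boldsymbol{z}_{j,d};\boldsymbol{\theta}^{-})$, up to the overall normalizing factor $1/\sum_{j'}D_{j'}$. Since this factor is constant across all candidate samples, it can be dropped without altering the ranking $\Delta_{j,d}/w_j$ used by the greedy rule of Algorithm \ref{al:greedy_algorithm}, which yields exactly the claimed expression.

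The main obstacle is justifying the second approximation — that training on $\boldsymbol{z}_{\mathrm{old}}\cup\{\boldsymbol{z}_{j,d}\}$ leaves the losses of the as-yet-unseen samples approximately equal to their values under $\boldsymbol{\theta}^{-}$. In general, parameter updates from one extra point can perturb predictions elsewhere, and this cross-effect could in principle be comparable to $\ell(\boldsymbol{z}_{j,d};\boldsymbol{\theta}^{-})$ itself. I would handle this by stating it explicitly as a first-order/local assumption, observing that for smooth loss landscapes the dominant term of the one-step change is indeed controlled by the new sample's current loss (consistent with well-known importance-sampling and influence-function arguments), and by noting that the ED-indexed constant $1/\sum_{j'}D_{j'}$ does not affect the greedy ordering. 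A secondary but minor point to verify is causality: $\boldsymbol{\theta}^{-}$ is available at the server before the transmission round, so $\ell(\boldsymbol{z}_{j,d};\boldsymbol{\theta}^{-})$ is computable at ED $j$ given the current model weights, fitting cleanly into the large-payload protocol of Algorithm \ref{al:embb}.
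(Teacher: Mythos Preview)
Your proposal is correct and follows essentially the same route as the paper: the paper derives the two approximations (in-sample losses vanish; losses on still-unseen data are unchanged between $\boldsymbol{\theta}^-$ and the retrained model) for the \emph{total} gain $\mathcal{C}(\boldsymbol{z}_{\mathrm{old}})-\mathcal{C}(\boldsymbol{z}_{\mathrm{new}})$ and then simply reads off the per-sample term $\ell(\boldsymbol{z}_{j,d};\boldsymbol{\theta}^-)$ from the resulting sum, whereas you instantiate the single-sample marginal via Proposition~1 first and then apply the same two approximations --- but the substance is identical. Your explicit remark that the constant $1/\sum_{j'}D_{j'}$ can be dropped without affecting the greedy ordering is a detail the paper leaves implicit.
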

This method is consistent with the principles of \textit{active learning}. Collecting data that the existing model ($\boldsymbol{\theta}^-$) is least confident about, especially those ambiguous and near the decision boundary, allows training with fewer samples. For example, for hinge loss (usually used in support vector machines), data samples close to the support plane induce higher loss, therefore a higher priority to be inquired. For multiclass classifications, data samples with higher predicted entropy are crucial for enhancing accuracy and are therefore given higher inquiry priority.

When the model weights are small and the EDs possess computing capabilities, the model can be broadcast via the downlink channel from the edge server to each ED. Following each training round, the edge server broadcasts the model weights $\boldsymbol{\theta}^-$ to all EDs, just as shown in Fig. \ref{fig:commu}. Upon receiving the model weights, EDs can compute Eq. \eqref{eq:margin2}. When dealing with large model weights, the CPS operator uses the expected information utility gains from each ED as a prediction of future gains, similar to the approach in Problem (P3).

\subsection{Goal-oriented RBA for Intermediate Data Transmission}
This section presents two additional typical examples of Problem (P1): federated learning and distributed optimization. For privacy reasons, the data transmitted during communications can only include intermediate results (such as the gradient). Thus, it mandates the adoption of a distributed approach, entailing numerous iterations and rounds of communications for convergence. One should distinguish between two forms of data involved in these processes: `raw data' and `transmitted data'. The raw data, not intended for transmission, are denoted by $\mathsf{z}$. The transmitted data are denoted as previously by $\boldsymbol{z}$.

\subsubsection{Federated Learning}
In the context of federated learning, various EDs collaborate to jointly train a learning model, adhering to a similar loss function as that in edge learning:
\begin{equation}
\min_{\boldsymbol{\theta }} \,L(\mathsf{z};\boldsymbol{\theta })=\min_{\boldsymbol{\theta }} \,\,\frac{1}{\sum_j^{}{D_j}}\sum_{j,d}^{}{{\ell \left( \mathsf{z}_{j,d};\boldsymbol{\theta } \right)}},
\end{equation}
where $\mathsf{z}_{j,d}$ denotes the raw data samples at ED $j$ for training. To protect privacy, EDs compute the gradients of the loss function with respect to the model weights locally and transmit gradients rather than raw data samples to the server. The server aggregates these gradients, updates the model weights, and transmits them to the EDs. This procedure is repeated for $K$ rounds ($k=1,...,K$)  to facilitate the convergence of the final weights.

During round $k$, ED $j$ computes the gradient $\boldsymbol{g}_{j,k}$ and sends the gradient data $\boldsymbol{z}_{j,k}^+$ as follows:  
\begin{equation}
\boldsymbol{g}_{j,k}=\boldsymbol{z}_{j,k}^+=\frac{1}{D_j}\sum_d^{}{\nabla _{\boldsymbol{\theta }}\ell\left( \mathsf{z}_{j,d};\boldsymbol{\theta }_{k-1} \right)}.
\end{equation}

The server collects all the gradient updates, then computes the aggregated gradient as $\boldsymbol{g}_k=\frac{\sum\nolimits_j^{}{D_j\boldsymbol{g}_{j,k}}}{\sum\nolimits_j^{}{D_j}}$. Then the operator employs gradient descent to update the model weights $\boldsymbol{\theta }_k$ as:
\begin{equation}\label{eq:ga}
\boldsymbol{\theta }_k=\boldsymbol{\theta }_{k-1}-\eta _k\boldsymbol{g}_k=\boldsymbol{\theta }_{k-1}-\eta _k\frac{\sum\nolimits_j^{}{D_j\boldsymbol{g}_{j,k}}}{\sum\nolimits_j^{}{D_j}},
\end{equation}
where $\eta _k$ is the learning rate at round $k$. 

Typically, it is assumed that the loss function satisfies certain desirable conditions. The reduction in the loss function can be bounded as shown below:
\begin{lemma}\label{lemma1}
If the loss functions $L_{\boldsymbol{\theta }}(\mathsf{z})$ is $\eta$-smooth, that is, for $\forall \boldsymbol{\vartheta }, \boldsymbol{\theta}$,
\begin{equation}
L(\mathsf{z};\boldsymbol{\vartheta })-L(\mathsf{z};\boldsymbol{\theta })\le \left< \nabla _{\boldsymbol{\theta }}L,\boldsymbol{\vartheta }-\boldsymbol{\theta } \right> +\frac{\kappa}{2}\| \boldsymbol{\vartheta }-\boldsymbol{\theta }\| _{2}^{2}.
\end{equation}
The loss function reduction between any two consecutive rounds can be bounded by:
\begin{equation}
	L(\mathsf{z};\boldsymbol{\theta }_k)-L(\mathsf{z};\boldsymbol{\theta }_{k-1})\leqslant -\eta _k\left( 1-\frac{\kappa \eta _k}{2} \right) \left\| \boldsymbol{g}_k \right\| _{2}^{2}.
\end{equation}	
when $0<\eta_k \leq \frac{2}{\kappa}$, the loss function always descends. 
\end{lemma}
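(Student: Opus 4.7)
The plan is to apply the $\kappa$-smoothness hypothesis directly at the two consecutive iterates and then invoke the gradient descent update from Eq.~\eqref{eq:ga}. The key observation is that the aggregated quantity $\boldsymbol{g}_k=\frac{\sum_j D_j\boldsymbol{g}_{j,k}}{\sum_j D_j}$ coincides with the full-batch gradient $\nabla_{\boldsymbol{\theta}}L(\mathsf{z};\boldsymbol{\theta}_{k-1})$, so that the inner-product term in the smoothness inequality collapses into $\|\boldsymbol{g}_k\|_2^2$ and the quadratic correction becomes another scalar multiple of $\|\boldsymbol{g}_k\|_2^2$, giving exactly the claimed form.

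First I would invoke the $\kappa$-smoothness inequality with $\boldsymbol{\vartheta}=\boldsymbol{\theta}_k$ and $\boldsymbol{\theta}=\boldsymbol{\theta}_{k-1}$, writing $\nabla_{\boldsymbol{\theta}}L$ for the gradient evaluated at $\boldsymbol{\theta}_{k-1}$. Substituting the update $\boldsymbol{\theta}_k-\boldsymbol{\theta}_{k-1}=-\eta_k\boldsymbol{g}_k$ from Eq.~\eqref{eq:ga} into both the linear and quadratic terms yields
\begin{equation*}
L(\mathsf{z};\boldsymbol{\theta}_k)-L(\mathsf{z};\boldsymbol{\theta}_{k-1})\le -\eta_k\left<\nabla_{\boldsymbol{\theta}}L,\boldsymbol{g}_k\right>+\frac{\kappa\eta_k^2}{2}\|\boldsymbol{g}_k\|_2^2.
\end{equation*}
Next, identifying $\nabla_{\boldsymbol{\theta}}L(\mathsf{z};\boldsymbol{\theta}_{k-1})=\boldsymbol{g}_k$, the inner product collapses to $\|\boldsymbol{g}_k\|_2^2$, and a direct factorization produces
\begin{equation*}
L(\mathsf{z};\boldsymbol{\theta}_k)-L(\mathsf{z};\boldsymbol{\theta}_{k-1})\le -\eta_k\left(1-\frac{\kappa\eta_k}{2}\right)\|\boldsymbol{g}_k\|_2^2,
\end{equation*}
which is the claimed bound. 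The descent corollary is then immediate: the prefactor $\eta_k(1-\kappa\eta_k/2)$ is non-negative precisely when $0<\eta_k\le 2/\kappa$, making the right-hand side non-positive and therefore the loss monotonically non-increasing across rounds.

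The argument reduces to a classical one-line descent lemma, so there is no serious computational obstacle. The only subtlety worth flagging is the identification step, i.e., that the aggregated $\boldsymbol{g}_k$ truly equals $\nabla_{\boldsymbol{\theta}}L(\mathsf{z};\boldsymbol{\theta}_{k-1})$. This holds under the idealized assumption that every ED transmits its exact local gradient $\boldsymbol{g}_{j,k}$ on a fresh, lossless channel; in the goal-oriented RBA setting, where only the scheduled subset $\{j:s_j=1\}$ contributes, the aggregated update is a biased surrogate, and Lemma~\ref{lemma1} should be read either as the per-round descent result in the full-participation regime or as the motivation for the upcoming Proposition~\ref{federated_m} that quantifies the individual gradient norm $\|\boldsymbol{g}_{j,k}\|_2$ as the marginal information utility gain driving the RBA decisions.
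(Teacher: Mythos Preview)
Your argument is correct and is precisely the standard descent-lemma computation one would expect here. The paper in fact states Lemma~\ref{lemma1} without proof, treating it as a well-known consequence of $\kappa$-smoothness plus the gradient step in Eq.~\eqref{eq:ga}; your derivation fills in exactly those two substitutions, and the identification $\boldsymbol{g}_k=\nabla_{\boldsymbol{\theta}}L(\mathsf{z};\boldsymbol{\theta}_{k-1})$ is the intended reading in the full-participation regime that Lemma~\ref{lemma1} addresses (the partial-participation case being handled separately in the subsequent lemma).
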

The $\kappa$-smooth condition indicates that the loss function gradient will not change drastically when $\boldsymbol{\theta }$ changes. This condition holds for many types of loss function and is often used in convergence analysis.

If only part of EDs $j\in \{ j:s_j=1 \}$ update their gradients, the aggregated weights will be updated as:
\begin{equation}\label{eq:partial_ga}
\boldsymbol{\theta }_k=\boldsymbol{\theta }_{k-1}-\eta _k\tilde{\boldsymbol{g}}_k=\boldsymbol{\theta }_{k-1}-\eta _k\frac{\sum\nolimits_{j\in \{ j:s_j=1 \} }^{}{D_j\boldsymbol{g}_{j,k}}}{\sum\nolimits_{j\in \{ j:s_j=1 \} }^{}{D_j}}.
\end{equation}
\begin{lemma}
	For the partial update, the reduction of the loss function between two consecutive rounds can be bounded by:
	\begin{equation}
		\begin{aligned}
		&L(\mathsf{z};\boldsymbol{\theta }_k)-L(\mathsf{z};\boldsymbol{\theta }_{k-1})\leqslant 
		\\
		&-\eta _k\left( 1-\frac{\kappa \eta _k}{2} \right) \left\| \tilde{\boldsymbol{g}}_k \right\| _{2}^{2}-\eta _k\left< \boldsymbol{g}_k-\tilde{\boldsymbol{g}}_k,\tilde{\boldsymbol{g}}_k \right> .
		\end{aligned}
	\end{equation}
	when $\eta_k \leq \frac{2}{\kappa}$ and $\left< \boldsymbol{g}_k-\tilde{\boldsymbol{g}}_k,\tilde{\boldsymbol{g}}_k \right>  \geqslant 0$, the loss function is always descending. 
\end{lemma}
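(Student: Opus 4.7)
The plan is to mimic the derivation of Lemma \ref{lemma1} but carefully track the mismatch between the gradient appearing in the smoothness inequality (the true full gradient $\boldsymbol{g}_k = \nabla_{\boldsymbol{\theta}} L(\mathsf{z};\boldsymbol{\theta}_{k-1})$) and the direction actually used in the partial update (the aggregated partial gradient $\tilde{\boldsymbol{g}}_k$ of Eq. \eqref{eq:partial_ga}). The key algebraic trick is to split the inner product $\langle \boldsymbol{g}_k,\tilde{\boldsymbol{g}}_k\rangle$ into a $\|\tilde{\boldsymbol{g}}_k\|_2^2$ term plus a residual involving $\boldsymbol{g}_k-\tilde{\boldsymbol{g}}_k$, which is exactly the cross term in the target bound.

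First, I would apply the $\kappa$-smoothness inequality of Lemma \ref{lemma1} with $\boldsymbol{\vartheta}=\boldsymbol{\theta}_k$ and $\boldsymbol{\theta}=\boldsymbol{\theta}_{k-1}$. Using the partial-update rule $\boldsymbol{\theta}_k-\boldsymbol{\theta}_{k-1}=-\eta_k\tilde{\boldsymbol{g}}_k$ and identifying $\nabla_{\boldsymbol{\theta}}L(\mathsf{z};\boldsymbol{\theta}_{k-1})=\boldsymbol{g}_k$, this yields
\begin{equation}
L(\mathsf{z};\boldsymbol{\theta}_k)-L(\mathsf{z};\boldsymbol{\theta}_{k-1}) \leq -\eta_k\langle \boldsymbol{g}_k,\tilde{\boldsymbol{g}}_k\rangle +\frac{\kappa\eta_k^2}{2}\|\tilde{\boldsymbol{g}}_k\|_2^2.
\end{equation}

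Next, I would rewrite the cross term via the identity $\langle \boldsymbol{g}_k,\tilde{\boldsymbol{g}}_k\rangle = \|\tilde{\boldsymbol{g}}_k\|_2^2 + \langle \boldsymbol{g}_k-\tilde{\boldsymbol{g}}_k,\tilde{\boldsymbol{g}}_k\rangle$, obtained by adding and subtracting $\tilde{\boldsymbol{g}}_k$ inside the first argument. Substituting and collecting the $\|\tilde{\boldsymbol{g}}_k\|_2^2$ terms produces
\begin{equation}
L(\mathsf{z};\boldsymbol{\theta}_k)-L(\mathsf{z};\boldsymbol{\theta}_{k-1}) \leq -\eta_k\Bigl(1-\tfrac{\kappa\eta_k}{2}\Bigr)\|\tilde{\boldsymbol{g}}_k\|_2^2 -\eta_k\langle \boldsymbol{g}_k-\tilde{\boldsymbol{g}}_k,\tilde{\boldsymbol{g}}_k\rangle,
\end{equation}
which is exactly the claimed bound. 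The descent condition then follows by noting that $\eta_k\leq 2/\kappa$ makes the first coefficient non-negative, and the sign assumption $\langle \boldsymbol{g}_k-\tilde{\boldsymbol{g}}_k,\tilde{\boldsymbol{g}}_k\rangle \geq 0$ makes the second term non-positive, so the right-hand side is $\leq 0$.

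There is no real obstacle here beyond bookkeeping: the entire argument is a two-line consequence of smoothness once one recognises that the descent direction and the gradient appearing in smoothness differ, and the cross term in the target inequality is precisely what that mismatch produces. The only subtlety worth flagging in the write-up is the interpretation of the sign condition $\langle \boldsymbol{g}_k-\tilde{\boldsymbol{g}}_k,\tilde{\boldsymbol{g}}_k\rangle\geq 0$, which informally says that the sampled partial gradient is not badly misaligned with the full gradient; this is the analogue of the unbiasedness assumption used in stochastic-gradient convergence proofs and motivates the goal-oriented scheduling rule in Algorithm \ref{al:greedy_algorithm} for federated learning, where participants are chosen so that $\tilde{\boldsymbol{g}}_k$ is a faithful surrogate of $\boldsymbol{g}_k$.
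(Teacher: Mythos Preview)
Your proof is correct and is precisely the standard argument one expects here: apply the $\kappa$-smoothness inequality at $\boldsymbol{\theta}_{k-1}$ with displacement $-\eta_k\tilde{\boldsymbol{g}}_k$, then split $\langle\boldsymbol{g}_k,\tilde{\boldsymbol{g}}_k\rangle$ into $\|\tilde{\boldsymbol{g}}_k\|_2^2+\langle\boldsymbol{g}_k-\tilde{\boldsymbol{g}}_k,\tilde{\boldsymbol{g}}_k\rangle$. The paper does not spell out a separate proof for this lemma (it is treated as an immediate extension of Lemma~\ref{lemma1}), so your two-step derivation is exactly the intended argument and nothing further is needed.
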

Generally, $\boldsymbol{g}_k$ and $\tilde{\boldsymbol{g}}_k$ are similar because the data samples $\mathsf{z}_{j}$ of different devices follow similar distributions. Thus, $\left< \boldsymbol{g}_k-\tilde{\boldsymbol{g}}_k,\tilde{\boldsymbol{g}}_k \right>  \approx 0$. Sometimes, it is assumed that the expected values of the two gradients are the same, meaning that the expected value of the last term is zero.
\begin{prop} \label{federated}
	The information utility gain for federated learning in iteration round $k$ is approximately equivalent to the aggregated gradient norm:
	\begin{equation}
		 \mathcal{C} \left( \boldsymbol{z}_{\mathrm{old}} \right) - \mathcal{C} \left( \boldsymbol{z}_{\mathrm{new}} \right)  \approx  \eta _k\left( 1-\frac{\kappa \eta _k}{2} \right) \left\| \tilde{\boldsymbol{g}}_{k} \right\| _{2}^{2}.
	\end{equation}
\end{prop}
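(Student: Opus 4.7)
The plan is to identify the information utility gain $\mathcal{C}(\boldsymbol{z}_{\mathrm{old}}) - \mathcal{C}(\boldsymbol{z}_{\mathrm{new}})$ with the one-round loss decrease and then invoke Lemma 2 to convert this decrease into the claimed gradient-norm expression. Since the CPS goal in federated learning is defined by $\mathcal{C}(\boldsymbol{z}) = \min_{\boldsymbol{\theta}} L(\mathsf{z};\boldsymbol{\theta})$ evaluated at the current model weights, the old data $\boldsymbol{z}_{\mathrm{old}}$ correspond to the pre-round weights $\boldsymbol{\theta}_{k-1}$ (trained on all prior gradient transmissions), while $\boldsymbol{z}_{\mathrm{new}}$ corresponds to $\boldsymbol{\theta}_k$ after applying the partial aggregation rule in Eq.~\eqref{eq:partial_ga}. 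Hence the first step is simply to write
\begin{equation*}
\mathcal{C}(\boldsymbol{z}_{\mathrm{old}}) - \mathcal{C}(\boldsymbol{z}_{\mathrm{new}}) = L(\mathsf{z};\boldsymbol{\theta}_{k-1}) - L(\mathsf{z};\boldsymbol{\theta}_k).
\end{equation*}

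Next, I would apply Lemma 2 directly. Multiplying the stated bound by $-1$ gives
\begin{equation*}
L(\mathsf{z};\boldsymbol{\theta}_{k-1}) - L(\mathsf{z};\boldsymbol{\theta}_k) \geqslant \eta_k\!\left(1-\tfrac{\kappa\eta_k}{2}\right)\|\tilde{\boldsymbol{g}}_k\|_2^2 + \eta_k\langle \boldsymbol{g}_k - \tilde{\boldsymbol{g}}_k, \tilde{\boldsymbol{g}}_k\rangle,
\end{equation*}
which is the exact (lower-bound) counterpart of the approximation to be proved. Combining the two displays links the information utility gain to the gradient term plus an inner-product residual.

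The last step, and really the only non-routine point, is justifying why the residual $\eta_k\langle \boldsymbol{g}_k - \tilde{\boldsymbol{g}}_k, \tilde{\boldsymbol{g}}_k\rangle$ can be dropped. Here I would lean on the discussion already present after Lemma 2: because the local datasets $\mathsf{z}_j$ across EDs are drawn from similar distributions, the partial aggregate $\tilde{\boldsymbol{g}}_k$ is (approximately) an unbiased estimator of the full aggregate $\boldsymbol{g}_k$, so $\mathbb{E}[\boldsymbol{g}_k - \tilde{\boldsymbol{g}}_k] \approx \mathbf{0}$ and consequently $\mathbb{E}\langle \boldsymbol{g}_k - \tilde{\boldsymbol{g}}_k, \tilde{\boldsymbol{g}}_k\rangle \approx 0$. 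Under this (standard) assumption the residual is negligible compared with the quadratic term, and the bound in Lemma 2 is approximately tight, yielding
\begin{equation*}
\mathcal{C}(\boldsymbol{z}_{\mathrm{old}}) - \mathcal{C}(\boldsymbol{z}_{\mathrm{new}}) \approx \eta_k\!\left(1-\tfrac{\kappa\eta_k}{2}\right)\|\tilde{\boldsymbol{g}}_k\|_2^2.
\end{equation*}

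The main obstacle is conceptual rather than calculational: one must argue that Lemma 2's inequality is essentially tight in practice. This requires (i) the $\kappa$-smoothness upper bound being close to equality for small steps, which is standard when $\eta_k$ is chosen moderately below $2/\kappa$, and (ii) the inner-product residual being negligible, which is the IID-style assumption on EDs' data. Both are consistent with the surrounding text, so the main work in a formal write-up is to state these assumptions cleanly and then simply chain the two displays above.
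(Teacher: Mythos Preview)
Your proposal is correct and mirrors the paper's own derivation: the paper does not give a separate formal proof of this proposition but simply reads it off from Lemma~2 together with the preceding remark that $\langle \boldsymbol{g}_k-\tilde{\boldsymbol{g}}_k,\tilde{\boldsymbol{g}}_k\rangle\approx 0$ under the similar-distribution assumption on EDs' data. Your identification $\mathcal{C}(\boldsymbol{z}_{\mathrm{old}})-\mathcal{C}(\boldsymbol{z}_{\mathrm{new}})=L(\mathsf{z};\boldsymbol{\theta}_{k-1})-L(\mathsf{z};\boldsymbol{\theta}_k)$ and subsequent invocation of Lemma~2 with the residual dropped is exactly the argument the paper uses.
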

The expression above is not separable and may not satisfy the submodular property. It is generally assumed that data samples from different EDs follow similar distributions. As a result, the inner products of the gradients of any two EDs tend to align in a similar direction. Thereby, $\left< \boldsymbol{g}_{j_1,k},\boldsymbol{g}_{j_2,k} \right> \geqslant 0, \forall j_1\ne j_2$ is generally true, and we have:
\begin{equation}
	\left\| \tilde{\boldsymbol{g}}_k \right\| _{2}^{2}\geqslant \sum_{j\in \{ j:s_j=1 \} }^{}{\left\| \frac{D_j\boldsymbol{g}_{j,k}}{\sum\nolimits_{j\in \{ j:s_j=1 \} }^{}{D_j}} \right\| _{2}^{2}}.
\end{equation}

Thereby, the marginal information utility gain for federated learning can be expressed as:
\begin{prop} \label{federated_m}
   The marginal information utility gain for federated learning at iteration round $k$ is approximately equivalent to the weighted gradient norm:
    \begin{equation}\label{eq:margin3}
    	\Delta _j=\eta _k\left( 1-\frac{\kappa \eta _k}{2} \right) \left\| \frac{D_j\boldsymbol{g}_{j,k}}{\sum\nolimits_{j\in \{ j:s_j=1 \} }^{}{D_j}} \right\|_2 ^2.
    \end{equation}
\end{prop}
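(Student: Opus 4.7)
The plan is to derive $\Delta_j$ directly from Proposition \ref{federated} by decomposing the aggregated gradient norm $\|\tilde{\boldsymbol{g}}_k\|_2^2$ into device-wise contributions, thereby producing an additive (and hence submodular) surrogate of the total information utility gain that matches the form required by Problem (P2).

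First I would start with the approximate identity from Proposition \ref{federated}, namely $\mathcal{C}(\boldsymbol{z}_{\mathrm{old}}) - \mathcal{C}(\boldsymbol{z}_{\mathrm{new}}) \approx \eta_k(1 - \kappa\eta_k/2)\|\tilde{\boldsymbol{g}}_k\|_2^2$, and substitute the definition of the partial aggregated gradient from Eq. \eqref{eq:partial_ga}. Writing $S = \{j : s_j = 1\}$ and $\boldsymbol{v}_j = D_j \boldsymbol{g}_{j,k}/\sum_{j' \in S} D_{j'}$, I would expand the squared norm using the bilinearity of the inner product:
\begin{equation}
\|\tilde{\boldsymbol{g}}_k\|_2^2 = \left\|\sum_{j \in S} \boldsymbol{v}_j\right\|_2^2 = \sum_{j \in S} \|\boldsymbol{v}_j\|_2^2 + \sum_{\substack{j_1, j_2 \in S \\ j_1 \neq j_2}} \langle \boldsymbol{v}_{j_1}, \boldsymbol{v}_{j_2}\rangle.
\end{equation}

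Next I would invoke the alignment assumption stated just above the proposition, $\langle \boldsymbol{g}_{j_1,k}, \boldsymbol{g}_{j_2,k}\rangle \geq 0$ for $j_1 \neq j_2$, which follows from the premise that local datasets $\mathsf{z}_j$ are drawn from similar distributions so that local stochastic gradients point in broadly consistent directions. Since $D_j/\sum_{j' \in S} D_{j'} > 0$, every cross term $\langle \boldsymbol{v}_{j_1}, \boldsymbol{v}_{j_2}\rangle$ inherits non-negativity, so dropping the cross sum yields the separable lower bound $\|\tilde{\boldsymbol{g}}_k\|_2^2 \geq \sum_{j \in S}\|\boldsymbol{v}_j\|_2^2$ which already appears in the excerpt. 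Multiplying through by $\eta_k(1 - \kappa \eta_k/2) \geq 0$ (guaranteed whenever $\eta_k \leq 2/\kappa$, the stable-descent regime from Lemma \ref{lemma1}) then identifies the $j$-th summand as the marginal contribution, which is exactly the claimed expression for $\Delta_j$ in Eq. \eqref{eq:margin3}.

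I would close by remarking that, because this lower bound is by construction a sum of per-device terms, the surrogate objective $\sum_j s_j \Delta_j$ is trivially additive and therefore satisfies Property 2 with equality, which is precisely what is needed to feed into the knapsack reformulation (P4). The main obstacle is not the algebra, which is short, but the justification of dropping the cross terms: strictly speaking, the lower bound converts a maximization of the true utility into the maximization of a pessimistic proxy, so I would carefully note that the tightness of this relaxation depends on the IID-ness assumption across EDs, and that the $s_j$-dependence of the normalizer $\sum_{j \in S} D_j$ technically couples the per-device terms; a clean proof should either assume $D_j$ is constant across devices or treat the normalizer as fixed at the scheduling stage, so that the $\Delta_j$ values can indeed be precomputed independently as required by Algorithms \ref{al:embb}--\ref{al:mmtc}.
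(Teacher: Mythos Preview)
Your proposal is correct and follows essentially the same route as the paper: start from Proposition~\ref{federated}, expand $\|\tilde{\boldsymbol{g}}_k\|_2^2$ into diagonal and cross terms, invoke the gradient-alignment assumption $\langle \boldsymbol{g}_{j_1,k},\boldsymbol{g}_{j_2,k}\rangle \geq 0$ to drop the cross terms, and read off $\Delta_j$ from the resulting separable lower bound. Your additional caveats about the direction of the bound and the $S$-dependence of the normalizer are well taken and in fact go beyond what the paper itself addresses.
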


\subsubsection{Distributed Optimization}
\begin{algorithm}[t] 
	\caption{Standard Distributed ADMM}\label{al:standard_algorithm}
	\SetAlgoLined
	\KwInput {Initial values $\boldsymbol{\lambda}_{j\left( 0 \right)},\boldsymbol{\theta}_{0\left( 0 \right)}, \boldsymbol{\theta}_{j\left( 0 \right)}$; stopping criterion $\varepsilon_1,\varepsilon_2$; $k=1$}
	\While {$\| \Delta\boldsymbol{\lambda }_{j\left( k,k-1 \right)}\| \geqslant \varepsilon_1$ \textbf{or} $ \| \Delta\boldsymbol{\theta}_{j\left( k,k-1 \right)}\| \geqslant \varepsilon_2$ }{
		\textbf{Step~1:} The operator server solves 
		\begin{equation} \label{eq:step1}
		\boldsymbol{\theta}_{0,k+1} = \mathrm{argmin} _{\boldsymbol{\theta }_0}\,\,\mathcal{L} ( \boldsymbol{\theta }_0,\boldsymbol{\theta}_{j,k},\boldsymbol{\lambda}_{j,k}) 		    
		\end{equation}
		and transmits variables $\boldsymbol{\theta}_{0(k+1)}$ to EDs \;
		\textbf{Step~2:} EDs receive $\boldsymbol{\theta}_{0(k+1)}$ and update $\boldsymbol{\theta}_{j(k+1)}$
  		\begin{equation}\label{eq:step2}
		\boldsymbol{\theta}_{j(k+1)} = \mathrm{argmin} _{\boldsymbol{\theta }_j}\,\,\mathcal{L} ( \boldsymbol{\theta }_{0,k+1},\boldsymbol{\theta}_{j},\boldsymbol{\lambda}_{j,k})     
		\end{equation}\\
		\textbf{Step~3:} EDs update multipliers $\boldsymbol{\lambda }_{j,k+1}$ as:
    	\begin{equation}\label{eq:step3}
		\boldsymbol{\lambda }_{j,k+1}=\boldsymbol{\lambda }_{j,k}+\rho \left( \boldsymbol{\theta}_{j,k+1}-\boldsymbol{\theta}_{0,k+1} \right) 
		\end{equation}
		And send updates $\boldsymbol{\theta}_{j,k+1}, \boldsymbol{\lambda }_{j,k+1}$ to the VPP.\\
		\textbf{Step~4:} The server waits until it receives all EDs' updates and sets $k=k+1$.   
	}
	\KwResult {$\boldsymbol{\theta}_{0,k}, \boldsymbol{\theta}_{j,k},\boldsymbol{\lambda}_{j,k}$}
\end{algorithm}
Distributed optimization frequently arises in domains of CPSs. For example, in vehicle networks, fleets can communicate to reach a consensus on obstacles and location. Similarly, in smart grids, energy users in a district might communicate to negotiate energy trading. Among various methods for distributed solutions, primal-dual decomposition stands out, with the Alternating Direction Method of Multipliers (ADMM) often used due to its robust convergence guarantees \cite{boyd2011distributed}.

With a little abuse of notation, suppose now that every ED has its own copy of decision variables $\boldsymbol{\theta }_j $ and the server has an individual copy of decision variable $\boldsymbol{\theta }_0$. The operator and EDs try to reach a consensus on the decision variables while minimizing the summed cost :
\begin{equation} \label{eq:consensus}
\min_{\boldsymbol{\theta }_j =\boldsymbol{\theta }_0} \,\sum\nolimits_j{L_j(\mathsf{z}_j;\boldsymbol{\theta }_j)},
\end{equation}
where $L_j(\mathsf{z}_j;\boldsymbol{\theta }_j)$ is the cost function of ED $j$. This optimization problem now introduces a consensus constraint $\boldsymbol{\theta }_j=\boldsymbol{\theta }_0$. The augmented Lagrangian function can then be written as:
\begin{equation} \label{eq:lag}
	\begin{aligned}
	\,\,  &\mathcal{L} \left( \boldsymbol{\theta }_0,\boldsymbol{\theta }_j,\boldsymbol{\lambda }_j \right) 
	\\
	=&\sum\nolimits_j^{}{L_j(\mathsf{z}_j;\boldsymbol{\theta }_j)+\left< \boldsymbol{\lambda }_j,\boldsymbol{\theta }_j-\boldsymbol{\theta }_0 \right> +\frac{\rho}{2}\left\| \boldsymbol{\theta }_j-\boldsymbol{\theta }_0 \right\| _{2}^{2}},
	\end{aligned}
\end{equation}
where $\boldsymbol{\lambda }_{j}$ denote the dual multipliers and $\rho$ is  the penalty factors. Using the decomposition from Eq. \eqref{eq:lag}, ADMM addresses the problem as shown in Algorithm \ref{al:standard_algorithm}, where the transmitted data include local primal decision variables $\boldsymbol{\theta}_j$ and dual multipliers $\boldsymbol{\lambda}_j$.

When only parts of the EDs $j\in \{ j:s_j=1 \}$ participate in the update, only $j\in \{ j:s_j=1 \}$ execute steps 2-3. Other EDs $j\in \{ j:s_j=0 \}$ remain inactive and their primary and dual variables remain unchanged $\boldsymbol{\theta }_{j,k+1}=\boldsymbol{\theta }_{j,k},\boldsymbol{\lambda }_{j,k+1}=\boldsymbol{\lambda }_{j,k}$. In the partial update scheme, given a convex and smooth objective function and a properly selected penalty factor, we can prove that the difference in the Lagrangian function between two consecutive rounds meets the following conditions:
\begin{prop} \label{distributed}
    The information utility gain for distributed ADMM optimization at iteration round $k$ can be approximated via the sum of the variable changes and can be expressed as:
    \begin{equation}
    \begin{aligned}
        \,\, &\mathcal{C} \left( \boldsymbol{z}_{\mathrm{old}} \right) - \mathcal{C} \left( \boldsymbol{z}_{\mathrm{new}} \right) 
        \\
        &=\mathcal{L} \left( \boldsymbol{\theta }_{0,k},\boldsymbol{\theta }_{j,k},\boldsymbol{\lambda }_{j,k} \right) - \mathcal{L} \left( \boldsymbol{\theta }_{0,k+1},\boldsymbol{\theta }_{j,k+1},\boldsymbol{\lambda }_{j,k+1} \right) 
        \\
        &\ge \sum_{j\in \{ j:s_j=1 \}}{\alpha}\left\| \boldsymbol{\theta }_{j,k+1}-\boldsymbol{\theta }_{j,k} \right\|_2 ^2+\beta \left\| \boldsymbol{\theta }_{0,k+1}-\boldsymbol{\theta }_{0,k} \right\|_2 ^2,
    \end{aligned}
    \end{equation}
    where $\alpha>0$ and $\beta>0$ are two positive constants.
\end{prop}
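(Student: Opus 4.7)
The plan is to decompose the one-step change of the augmented Lagrangian along the three ADMM sub-steps in Eqs.~\eqref{eq:step1}--\eqref{eq:step3} and bound each piece separately. For Step 1, the term $\frac{\rho}{2}\sum_j\|\boldsymbol{\theta}_j-\boldsymbol{\theta}_0\|_2^2$ makes $\mathcal{L}$ strongly convex in $\boldsymbol{\theta}_0$ with modulus at least $J\rho$; since $\boldsymbol{\theta}_{0,k+1}$ is the exact minimizer, strong convexity immediately gives the decrease $\mathcal{L}(\boldsymbol{\theta}_{0,k},\cdot)-\mathcal{L}(\boldsymbol{\theta}_{0,k+1},\cdot)\geq \tfrac{J\rho}{2}\|\boldsymbol{\theta}_{0,k+1}-\boldsymbol{\theta}_{0,k}\|_2^2$. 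For Step 2, each active device $j\in\{j:s_j=1\}$ minimizes a sub-problem whose objective is at least $\rho$-strongly convex in $\boldsymbol{\theta}_j$ (using convexity of $L_j$ plus the quadratic penalty), so the same argument produces an additional decrease of at least $\tfrac{\rho}{2}\sum_{j\in S}\|\boldsymbol{\theta}_{j,k+1}-\boldsymbol{\theta}_{j,k}\|_2^2$. Inactive devices contribute nothing because both $\boldsymbol{\theta}_j$ and $\boldsymbol{\lambda}_j$ are frozen by construction.

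The main obstacle is Step 3, which increases $\mathcal{L}$ by exactly $\sum_{j\in S}\frac{1}{\rho}\|\boldsymbol{\lambda}_{j,k+1}-\boldsymbol{\lambda}_{j,k}\|_2^2$ and must be absorbed by the primal decreases. The classical ADMM trick I will invoke is the first-order optimality condition of Step 2, which gives $\nabla L_j(\boldsymbol{\theta}_{j,k+1})+\boldsymbol{\lambda}_{j,k}+\rho(\boldsymbol{\theta}_{j,k+1}-\boldsymbol{\theta}_{0,k+1})=0$; combining with Eq.~\eqref{eq:step3} yields $\boldsymbol{\lambda}_{j,k+1}=-\nabla L_j(\boldsymbol{\theta}_{j,k+1})$, and the analogous identity from the previous iteration gives $\boldsymbol{\lambda}_{j,k+1}-\boldsymbol{\lambda}_{j,k}=-[\nabla L_j(\boldsymbol{\theta}_{j,k+1})-\nabla L_j(\boldsymbol{\theta}_{j,k})]$. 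Assuming $L_j$ is $\kappa$-smooth (exactly the condition used in Lemma~\ref{lemma1}), this yields the key bound $\|\boldsymbol{\lambda}_{j,k+1}-\boldsymbol{\lambda}_{j,k}\|_2\leq \kappa\|\boldsymbol{\theta}_{j,k+1}-\boldsymbol{\theta}_{j,k}\|_2$, so the dual-ascent increase is controlled by the same primal change that Step 2 already decreased.

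Summing the three contributions gives
\begin{equation}
\mathcal{L}_k-\mathcal{L}_{k+1}\geq \frac{J\rho}{2}\|\boldsymbol{\theta}_{0,k+1}-\boldsymbol{\theta}_{0,k}\|_2^2 + \sum_{j\in S}\left(\frac{\rho}{2}-\frac{\kappa^2}{\rho}\right)\|\boldsymbol{\theta}_{j,k+1}-\boldsymbol{\theta}_{j,k}\|_2^2,
\end{equation}
so any $\rho>\kappa\sqrt{2}$ yields strictly positive constants $\alpha=\tfrac{\rho}{2}-\tfrac{\kappa^2}{\rho}$ and $\beta=\tfrac{J\rho}{2}$, which is exactly the "properly selected penalty factor" clause in the proposition. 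A secondary technical point I expect to address is the partial-update bookkeeping in the identity $\boldsymbol{\lambda}_{j,k}=-\nabla L_j(\boldsymbol{\theta}_{j,k})$, which is only guaranteed for devices that were active in the previous round; for devices idle in round $k-1$, the last-recorded multiplier still matches the last-seen primal iterate, and since both are frozen across idle rounds, the smoothness bound continues to hold with the same $\kappa$ on the effective primal change. This reduces the partial-update analysis to the full-update case along the subsequence of active iterations, at the cost only of slightly looser (but still positive) $\alpha,\beta$, completing the proof.
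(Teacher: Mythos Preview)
Your proposal is correct and follows essentially the same route as the paper: decompose the Lagrangian change along the three ADMM sub-steps, use $\rho$-strong convexity of $\mathcal{L}$ in $\boldsymbol{\theta}_0$ and each $\boldsymbol{\theta}_j$ together with the optimality conditions of Eqs.~\eqref{eq:step1}--\eqref{eq:step2} to bound the primal decreases, compute the dual-ascent increase via Eq.~\eqref{eq:step3}, and control the latter by the smoothness bound $\|\boldsymbol{\lambda}_{j,k+1}-\boldsymbol{\lambda}_{j,k}\|_2\le\kappa\|\boldsymbol{\theta}_{j,k+1}-\boldsymbol{\theta}_{j,k}\|_2$ obtained from the identity $\boldsymbol{\lambda}_{j,k+1}=-\nabla L_j(\boldsymbol{\theta}_{j,k+1})$. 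Your constants are in fact slightly sharper than the paper's (you correctly note $J\rho$-strong convexity in $\boldsymbol{\theta}_0$ and write $\kappa^2/\rho$ rather than $\kappa/\rho$), and your explicit handling of the partial-update bookkeeping---that the identity $\boldsymbol{\lambda}_{j,k}=-\nabla L_j(\boldsymbol{\theta}_{j,k})$ persists across idle rounds because both quantities are frozen---fills in a point the paper leaves implicit.
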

The proof can be found in the Appendix. When the Lagrangian function $\mathcal{L} \left( \boldsymbol{\theta }_0,\boldsymbol{\theta }_j,\boldsymbol{\lambda }_j \right) $ has a lower bound and follows a monotonically decreasing sequence, the variables $\boldsymbol{\lambda}_{j,k},\boldsymbol{\theta}_{0,k}, \boldsymbol{\theta}_{j,k}$ converge to the Karush-Kuhn-Tucker (KKT) point of the problem Eq. \eqref{eq:consensus}. Consequently, by selecting ED data aligned with a larger Lagrangian reduction, we inherently opt for the steepest descent toward the KKT point.

\begin{prop}  \label{distributed_m}
    The marginal information utility gain for distributed ADMM optimization at iteration round $k$ can be approximated by the sum of variable changes for each ED $j$:
    \begin{equation}\label{eq:margin4}
    	\Delta _{j}=\alpha \left\| \boldsymbol{\theta }_{j,k}-\boldsymbol{\theta }_{j\left( k-1 \right)} \right\|_2 ^2.
    \end{equation} 
\end{prop}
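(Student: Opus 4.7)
The plan is to derive Proposition \ref{distributed_m} as a per-device specialization of the lower bound already established in Proposition \ref{distributed}. I would start from
\begin{equation*}
\mathcal{C}(\boldsymbol{z}_{\mathrm{old}}) - \mathcal{C}(\boldsymbol{z}_{\mathrm{new}}) \geq \sum_{j \in \{j:s_j=1\}} \alpha \left\| \boldsymbol{\theta}_{j,k+1} - \boldsymbol{\theta}_{j,k} \right\|_2^2 + \beta \left\| \boldsymbol{\theta}_{0,k+1} - \boldsymbol{\theta}_{0,k} \right\|_2^2,
\end{equation*}
and observe that the right-hand side already decomposes into a sum of device-local primal-increment terms plus one server-side term. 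Attributing the utility gain device by device then identifies the marginal contribution of ED $j$ with the $\alpha \|\boldsymbol{\theta}_{j,k+1} - \boldsymbol{\theta}_{j,k}\|_2^2$ summand, while the shared $\beta$ term, which does not depend on any single $j$ in a separable way, is set aside as a common baseline consistent with the submodular/additive surrogate approach introduced in Section~\ref{solution}.

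Next I would address the non-causality: at the moment the operator executes Algorithm~\ref{al:greedy_algorithm} for round $k$, the updated variables $\boldsymbol{\theta}_{j,k+1}$ produced by Eq.~\eqref{eq:step2} are not yet available, so they cannot drive the scheduling decision. Following the small-payload strategy used to obtain Problem (P3), I would replace the unobservable one-step-ahead increment by the most recently observed one, i.e., use $\|\boldsymbol{\theta}_{j,k} - \boldsymbol{\theta}_{j(k-1)}\|_2^2$ as a predictor of $\|\boldsymbol{\theta}_{j,k+1} - \boldsymbol{\theta}_{j,k}\|_2^2$. The justification is that under the standard ADMM convergence assumptions invoked for Proposition \ref{distributed}, the primal iterates form a Cauchy-like sequence, so consecutive increments change slowly and the temporal correlation is strong. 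Plugging this substitution into the per-device contribution yields Eq.~\eqref{eq:margin4}.

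The main obstacle is rigorously justifying the decoupling of the server-side $\beta$ term when several devices are simultaneously activated: the actual $\boldsymbol{\theta}_{0,k+1}$ produced by Eq.~\eqref{eq:step1} couples all participating EDs through the quadratic penalty in Eq.~\eqref{eq:lag}, so the attribution of the cross-device term to individual devices is only approximate. I would therefore state the result as an approximation (matching the wording of the proposition) and lean on the general submodular-surrogate argument used throughout Section~\ref{solution} to legitimize replacing the non-separable exact gain by a separable lower bound that preserves the ranking induced by the greedy rule in Algorithm~\ref{al:greedy_algorithm}.
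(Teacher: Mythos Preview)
Your proposal is correct and follows essentially the same route as the paper: start from the separable lower bound of Proposition~\ref{distributed}, read off the per-device summand $\alpha\|\boldsymbol{\theta}_{j,k+1}-\boldsymbol{\theta}_{j,k}\|_2^2$, and drop the coupled server term. The only notable difference is the heuristic used to justify discarding the $\beta\|\boldsymbol{\theta}_{0,k+1}-\boldsymbol{\theta}_{0,k}\|_2^2$ term: you treat it as a shared baseline absorbed by the submodular-surrogate machinery of Section~\ref{solution}, whereas the paper argues that large individual increments $\|\boldsymbol{\theta}_{j,k}-\boldsymbol{\theta}_{j(k-1)}\|_2^2$ \emph{cause} large consensus increments (since $\boldsymbol{\theta}_{0}$ tracks the consensus), so the two terms are positively correlated and the greedy ranking is unaffected. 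Your explicit treatment of the index shift from $(k,k{+}1)$ to $(k{-}1,k)$ via the small-payload predictor argument is a detail the paper does not spell out; the paper simply states Eq.~\eqref{eq:margin4} with the shifted indices without comment.
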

The second term is neglected as an approximation. The approximation is reasonable. Large changes in individual decision variables, $ \| \boldsymbol{\theta }_{j,k}-\boldsymbol{\theta }_{j\left( k-1 \right)} \|^2_2$, will also lead to significant changes in consensus, as $\boldsymbol{\theta }_{0,k}$ adjusts in response to the new consensus solution.

\section{Simulation Results} \label{case}
In this section, we evaluate the effectiveness of our proposed RBA framework. The simulation parameters are set as follows unless otherwise specified. According to LTE standards, each RB is $B=180$ kHz wide in frequency and 1 slot ($t=0.5$ ms) long in time. We assume that there are a total of $3$ MHz or $15$ RBs per slot that is delicately allocated to the CPS operator, which it can freely allocate. The scheduling time interval is set to be 1 second and the channel scheduler will re-schedule the RB allocation every second. The channel noise spectral density is set to $-52.7$ dBW / Hz over the band for each ED. The channel noise is set to be $B\sigma^2=1$ over the band for each ED. All channels $g$ are assumed to be i.i.d. Rayleigh fading with scale parameter 1. By default, we assume that the accurate channel gain can be observed and that the channel gain will vary every second. The unscaled transmitting power of EDs is set as $p_j=1$ W \cite{Han2023}.
\subsection{Application Setting}
\subsubsection{Data-driven Decision Making}
We set up the case with a large number of accesses and with small-size data from EDs. Algorithm \ref{al:mmtc} will be used to allocate RBs. Referring to the emergency demand response problem detailed in Eqs. \eqref{eq:edr1}-\eqref{eq:data_load}, We have a total of $J=15000$ EDs capable of reducing energy demand. The cost for each ED is set at random, in the range $c_j\in[0,5]$ \$/kW. The overall minimum energy reduction is $\pi_{\min}=10^4$ kW. The reduction amount that each ED can offer varies, but it's uniformly distributed between $\xi _j\in[1,\xi_{j,\max}]$ kW. The maximum reduction $\xi_{j,\max}$ is also random, but it won't exceed $30$ kW. Past data or the old dataset follows this same distribution.  Even though the main data might be small, there's often extra or redundant data in the packet head. So, we consider the size of one sensor data packet to be $64$ bytes.

\subsubsection{Edge Learning}
We set up the case with large-size data from EDs. Algorithm \ref{al:embb} will be used to allocate RBs. The learning task is set to classify images, and we use the widely-recognized MNIST dataset. It consists of $10$ categories ranging from digit `0' to `9' and a total of $60000$ labeled training data samples. Each image is 28 by 28 pixels, resulting in a size of 784 pixels. We divide the dataset as $50000$ samples of the training set and $10000$ samples of the test set. We consider $J = 10$ EDs and each ED has $10000$ data samples. We consider a non-i.i.d data distribution case: the data samples that contain the digits `6' and `9' are mainly included in two EDs. We train a multilayer perceptron which has a $784$-unit input layer with ReLU activation, a $64$-unit hidden layer, and a $10$-unit softmax output layer, with $50890$ weights in total~\cite{Sun2022}. The initial learning rate $\eta$ is set to $0.01$. A momentum of $0.9$ is adopted. The cross-entropy is adopted as the loss function. The mini-batch size is $512$, and the number of epochs is $20$. 

\subsubsection{Federated Learning}
In the context of federated learning, we maintain consistency with the learning objectives and parameters previously mentioned for edge learning. Algorithm \ref{al:embb} will be used to allocate RBs. For each update round within this federated learning framework, every ED will execute a single iteration of stochastic mini-batch gradient descent.
\subsubsection{Distributed Optimization}
We set up the case with large-size data from EDs. Algorithm \ref{al:embb} will be used to allocate RBs. For distributed optimization, we consider $J = 10$ EDs working collaboratively on the task of system identification. System identifications are commonly used in areas such as wireless location detection and smart grid controls. For system identification, the EDs collectively collect data according to the law of $\mathbf{Y}_j=\boldsymbol{\theta}\mathbf{ X}_j+\mathbf{N}_j$. $\mathbf{Y}_j$ and $\mathbf{X}_j$ represent the observation data and state data acquired by ED $j$, respectively. Meanwhile, $\mathbf{N}_j$ denotes the observation noise associated with ED $j$, which is modeled by a Gaussian distribution. Each ED aims to estimate the matrix $\boldsymbol{\theta}$ based on its individual observations. The objective function for each ED is:
\begin{equation}
	L_j(\mathsf{z}_j;\boldsymbol{\theta }_j)=\frac{1}{2}\| \mathbf{Y}_j-\boldsymbol{\theta }_j\mathbf{X}_j\| _{F}^{2}+\varrho \| \boldsymbol{\theta }_j\| _1.
\end{equation}

$\boldsymbol{\theta}$ is set as a $100\times 100$ random matrix with $50\%$ elements being zeros. Each ED has a total of $30$ data samples of $\mathbf{Y}_j$ and $\mathbf{X}_j$. The noise is assumed to follow a normal distribution with zero mean and variance of $0.015\times j$. This configuration ensures that different EDs obtain observations of varying quality. The sparsity regulation penalty is set as $\varrho =0.1$, and the ADMM penalty is set as $\rho=0.1$.
\subsubsection{Benchmark Setting}
We utilize three RBA policies:
\begin{itemize}[leftmargin=8pt]
	\item Channel-based RBA: RBs are assigned based on the quality of the channel. The ED with the best channel quality receives sufficient RB allocations to transmit its data, followed by the next best, and so on. This method aims to maximize network throughput and is consistent with current cellular network practices. 
	\item Utility-based RBA: Channels are assigned based on the ED data utility gain. The ED with the highest marginal information utility gain is provided enough RB allocations for data transmission, then the next highest, and so on. This method does not consider poor channel quality. 
	\item Hybrid RBA: This method uses both Algorithm \ref{al:embb} and Algorithm \ref{al:mmtc}. Termed 'hybrid', it factors in both the channel quality and information utility when allocating RBs.
\end{itemize}

\subsection{Data-driven Decision Making}
\begin{figure}[t]
	\centering
	\includegraphics[width=0.49\textwidth]{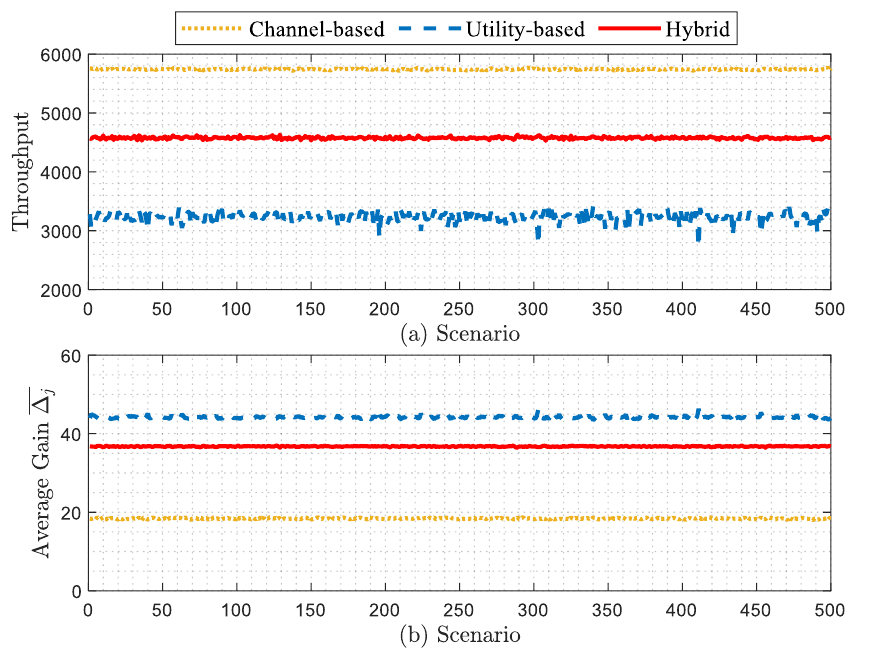}
	\caption{(a) The throughput and (b) the relative information utility gain of communication networks under different policies in different scenarios (data-driven decision-making).}
	\label{fig:1-1}
\end{figure}
\begin{figure}[t]
	\centering
	\includegraphics[width=0.49\textwidth]{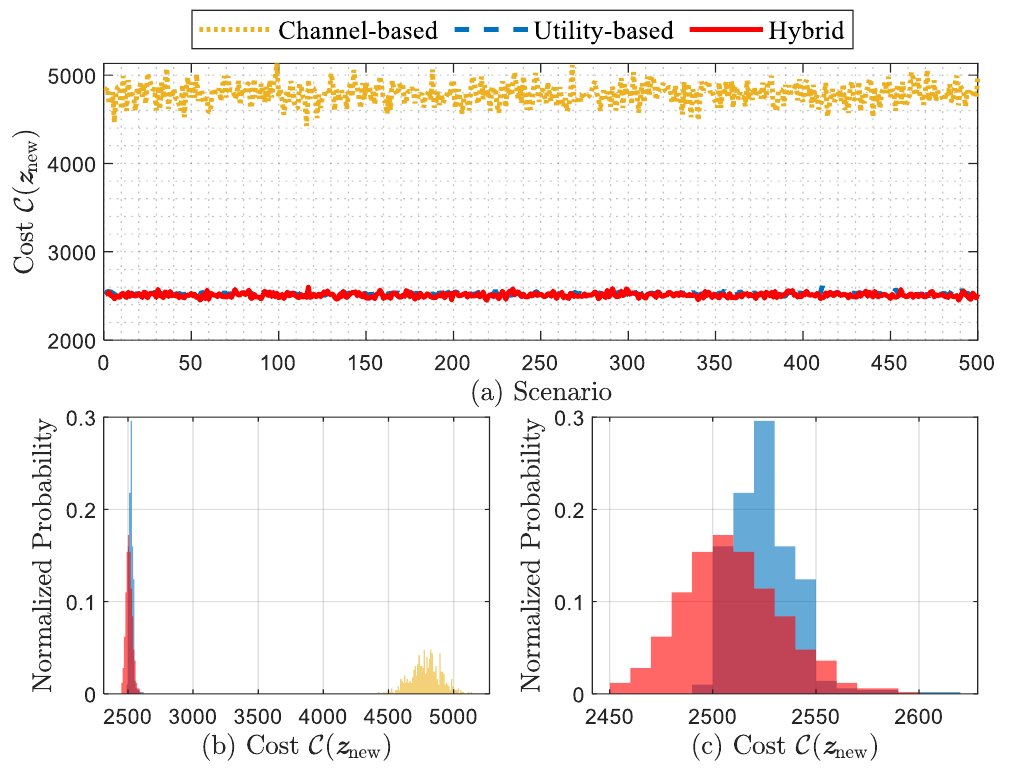}
	\caption{(a) The decision cost under different policies in different scenarios; (b) the probability distribution of decision cost for the $500$ scenarios for three policies; (c) the probability distribution of decision cost for the $500$ scenarios for utility-based and hybrid based policies (data-driven decision making).}
	\label{fig:1-2}
\end{figure}
We simulate the ED data collection in $500$ scenarios, each presenting a different channel gain for each ED. For the 500 scenarios, the throughput and the relative information utility gain are shown in Fig. \ref{fig:1-1}(a) and (b), respectively. The throughput is measured as the number of EDs that are scheduled to upload their data. Notably, the channel-based policy achieves the maximum throughput while delivering the minimum mean information utility gain. In contrast, the utility-based policy achieves the maximum mean information utility gain while throughput performance is poor. The hybrid policy delivers mediate performance between the two.

The goal of the CPS operator is to reduce decision costs. In each scenario, after acquiring the new data $\boldsymbol{z}_{\mathrm{new}}$, we solve the optimization problem described in Eqs. \eqref{eq:edr1}-\eqref{eq:data_load}. The cost $\mathcal{C}(\boldsymbol{z}_{\mathrm{new}})$ under different scenarios are depicted in Fig. \ref{fig:1-2}(a-c). From Fig. \ref{fig:1-1}(a-b), it is noteworthy that the channel-based policy fails to guarantee a low decision cost. The system costs under both the utility-based policy and the hybrid policy are close to each other. Fig. \ref{fig:1-2}(c) shows that the hybrid policy can guarantee a lower system cost than the utility-based policy. The hybrid policy achieves a balance between data volume and data utility.  Compared with the channel-based policy, the hybrid policy can reduce approximately $50\%$ of possible decision costs under the same communication budget.

\subsection{Edge Learning}
\begin{figure}[t]
	\centering
	\includegraphics[width=0.49\textwidth]{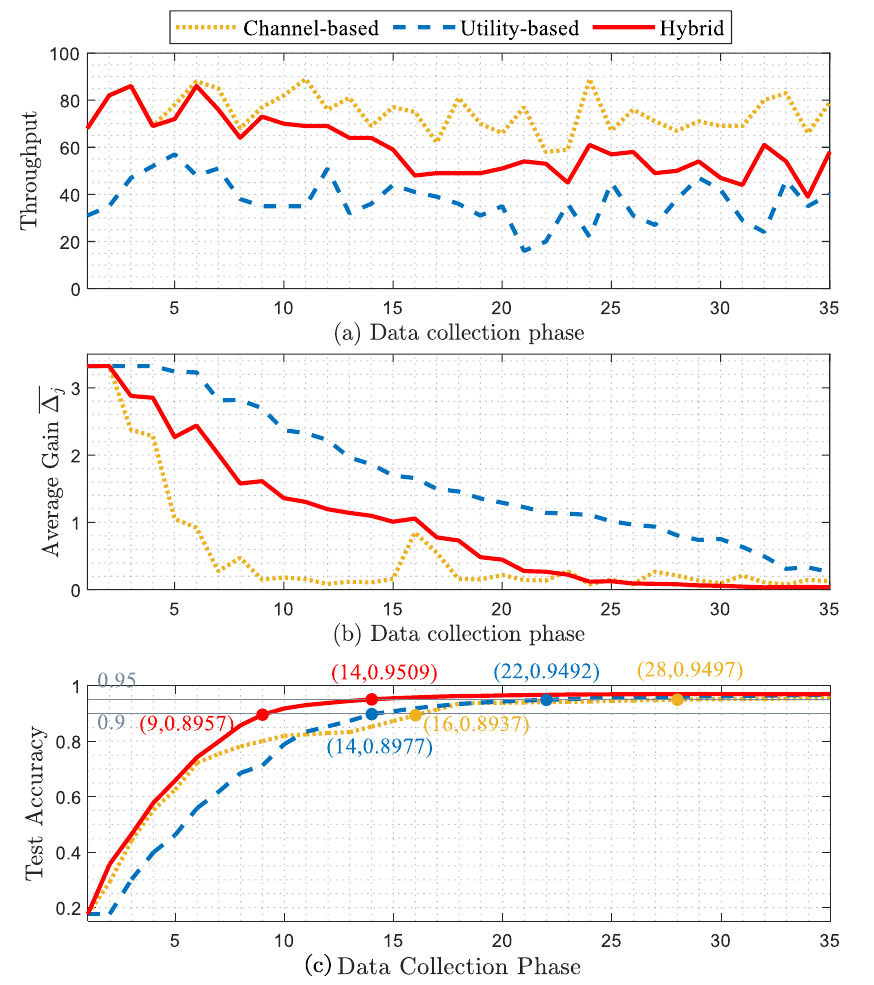}
	\caption{(a) The throughput and (b) the relative information utility gain of communication networks under different policies in different data collection rounds (edge learning). (c) The test accuracy under different data collection rounds. Marked points denote the points that are closest to the $0.9$ and $0.95$ accuracy level line. The numbers in the bracket denote (data collection round, test accuracy).}
	\label{fig:2-1}
\end{figure}
We conducted edge learning simulations across $35$ phases of data collection. Initially, the server did not have data samples. For each phase, it collects new data and retrains the learning model. For the 35 phases of data communication, the throughput and the relative information utility gain are shown in Fig. \ref{fig:2-1}(a) and Fig. \ref{fig:2-1}(b), respectively. The throughput is measured as the number of raw data samples/images that are uploaded to the edge server. The policies behave in a similar way to those in the data-driven decision-making case. 

The goal of the CPS operator is to minimize loss function and enhance model accuracy. In each data collection phase, we test the inference accuracy of the model on the test set, and the result is shown in Fig. \ref{fig:2-1}(c). In Fig. \ref{fig:2-1}(c), we also marked when the test accuracy under three different policies is closest to the accuracy $0.9$ and $0.95$. In the initial phase, the advantages of abundant data might surpass those of valuable data. The test accuracy under the hybrid policy and the channel-based policy are close to each other. However, as the dataset accumulates, the benefit of useful data may suppress the benefit of more data. Hence, the utility-based policy's test accuracy surpasses the channel-based ones. Nevertheless, the test accuracy of the hybrid policy consistently leads. Compared to channel-based RBA, the hybrid policy saves approximately $50\%$ iteration rounds (total training time) to achieve the precision of the $0.9$ and $0.95$ tests.

\subsection{Federated Learning}
\begin{figure}[t]
	\centering
	\includegraphics[width=0.49\textwidth]{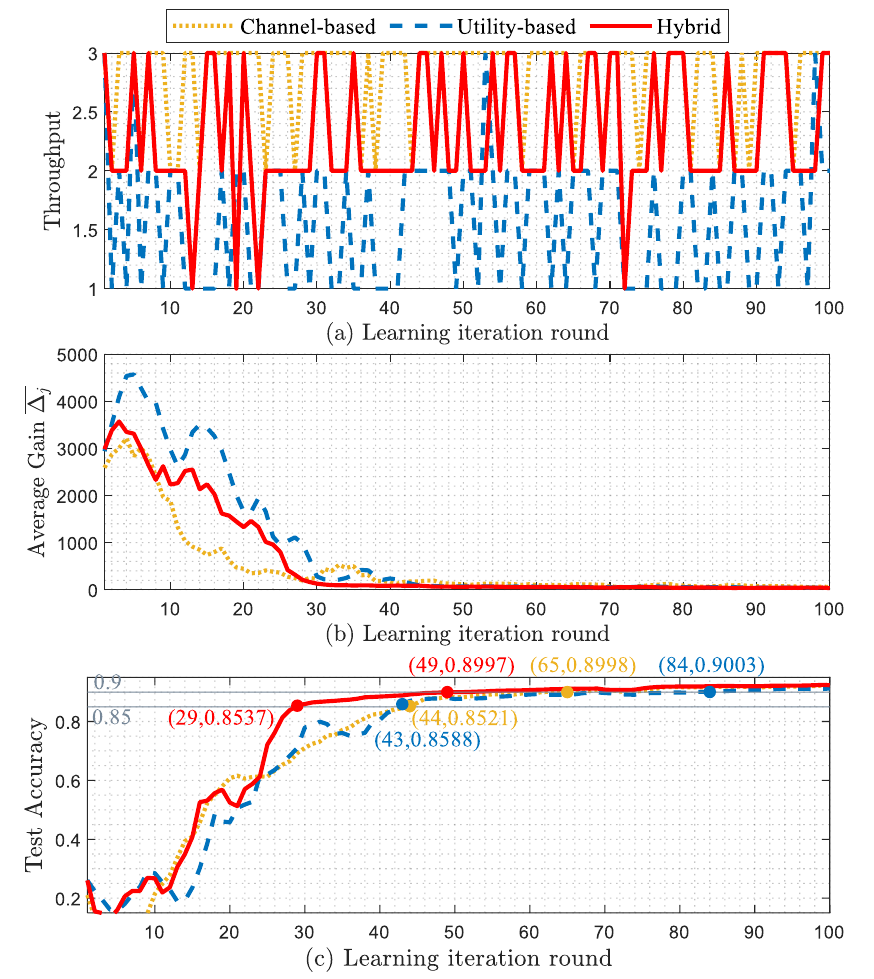}
	\caption{(a) The throughput and (b) the relative information utility gain of communication networks under different policies in different learning iteration rounds (federated learning). (c) The test accuracy under different learning iteration rounds. Marked points denote the points that are closest to the $0.9$ and $0.95$ accuracy level line. The numbers in the bracket denote (iteration round, test accuracy).}
	\label{fig:3-1}
\end{figure}
We simulate federated learning over a total of 100 iteration rounds. The server begins with an initial weight. In each round, the operator aggregates the ED gradients and performs gradient descent updates. For the 100 iteration rounds, the throughput and the relative information utility gain are shown in Fig. \ref{fig:3-1}(a) and Fig. \ref{fig:3-1}(b), respectively. The throughput is measured as the number of EDs that uploaded their gradients to the edge server. The policies demonstrate trends analogous to those in the data-driven decision-making case. 

The goal of the CPS operator is to reduce the loss function and enhance model accuracy. In each learning round, we test the inference accuracy of the model on the test set and record the loss function value, shown in Fig. \ref{fig:3-1}(c). In Fig. \ref{fig:3-1}(c), we also highlight when the test accuracy under three different policies approaches accuracy $0.9$ and $0.95$. Federated learning performs similarly to edge learning. In the initial phase, the benefit of more gradient information may provide more accurate gradient descent information. The test accuracy under the hybrid policy and the channel-based policy are close to each other. However, as the dataset accumulates, the benefit of high-norm gradient information may suppress the benefit of more gradient information. This might stem from the model being already around its convergence region. The crucial aspect is identifying the steepest direction towards the minimum.  Compared with the channel-based policy, the hybrid policy can reduce approximately $30\%$ of the total training time to reach $0.9$ and $0.95$ test accuracy under the same communication budget. 

\subsection{Distributed Optimization}
\begin{figure}[t]
	\centering
	\includegraphics[width=0.49\textwidth]{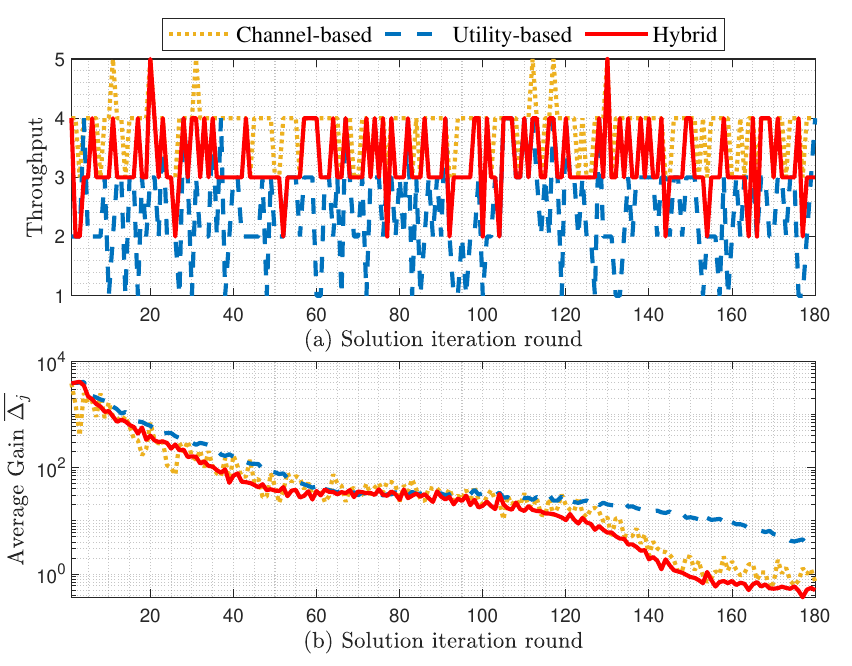}
	\caption{(a) The throughput and (b) the relative information utility gain of communication networks under different policies in different solution iteration rounds (distributed optimization).}
	\label{fig:4-1}
\end{figure}
\begin{figure}[t]
	\centering
	\includegraphics[width=0.49\textwidth]{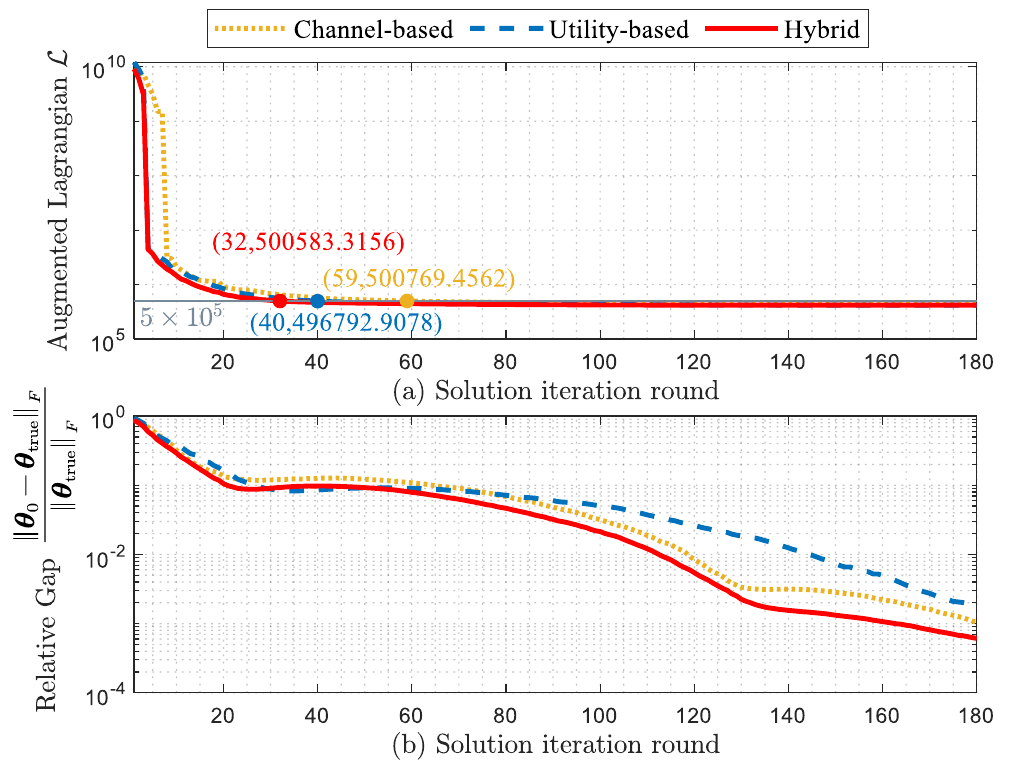}
	\caption{(a) The value of the augmented Lagrangian function and (b) the relative gap under different learning solution rounds. Marked points denote the points that are closest to the $5\times 10^5$ level line. The numbers in brackets denote (iteration round, Lagrangian function value).}\label{fig:4-2}
\end{figure}

We conducted simulations of the distributed system identification over a total of 180 iteration rounds. All EDs and the server start with an initial guess $\boldsymbol{\theta}=\bf{0}$. For the 180 iteration rounds, the throughput and the relative information utility gain are shown in Fig. \ref{fig:4-1}(a) and Fig. \ref{fig:4-1}(b), respectively. The throughput is measured as the number of EDs that uploaded their primal and dual variables to the server. The observed policy trends align with the application of data-driven decision making. Here, the information utility gain under the hybrid policy is close to that of the channel-based policy.
 
The goal of the CPS operator is to identify the true system matrix $\boldsymbol{\theta}_{\mathrm{true}}$. In each round, we compute the value of the augmented Lagrangian function and relative gap between the server-estimated matrix $\boldsymbol{\theta}_{0}$ and the true system matrix $\boldsymbol{\theta}_{\mathrm{true}}$. The result is shown in Fig. \ref{fig:4-2}(a)-(b). It's evident that the hybrid approach contributes to reductions in both the Lagrangian function and the optimality gap. However, concerning the gap between the estimated matrix and the real matrix, the utility-based policy does not perform so well compared to the channel-based approach. In this case, the importance of more frequent updates could be more important. The consensus update will be slow if the central node estimation cannot gather enough information.  Compared to channel-based policy, hybrid policy can reduce approximately $30\%$ of iteration time to reach the relative optimality gap $10^{-3}$ within the same communication budget.

\section{Conclusion} \label{conclusion}
In this study, we presented a novel goal-oriented RBA framework tailored for future CPSs, termed as `goal-oriented communications'. Unlike traditional communication resource allocation methods, this approach prioritizes information transmission essential to CPS goals over purely transmission-centric goals. This paradigm is poised to gain prominence, especially as communication devices evolve to predominantly encompass massive machines and things, rather than humans. We conducted simulations encompassing typical use cases of CPSs. These simulations validate the efficacy of our proposed goal-oriented RBA strategy, which considers both the information semantics/ utility and the channel conditions.

Future works can be extended to incorporate a more detailed channel model, possibly including small-fading effects. A more detailed and realistic model can help to further release the potential of the goal-oriented communication paradigm. Future research could also explore more applications such as transient control of digital twin systems in CPSs, where the requirements for delay and reliability are more stringent. Furthermore, if allowed, a co-design framework for CPS operation and communication can be developed, which can further increase the system's efficiency.

\bibliographystyle{IEEEtran}
\bibliography{IEEEabrv,Reference}

\section*{Appendix}

% \subsection{Proof for Lemma 1 and Lemma 2}
% Notice that $\left. \nabla _{\boldsymbol{\theta }}L \right|_{\boldsymbol{\theta }=\boldsymbol{\theta }_{k-1}}=\boldsymbol{g}_k$. The $\kappa$-smooth condition can be written as:
% \begin{equation}
% \begin{aligned}
% L(\mathsf{z};\boldsymbol{\theta }_k)-L(\mathsf{z};\boldsymbol{\theta }_{k-1})\leqslant \left< \boldsymbol{g}_k,\boldsymbol{\theta }_k-\boldsymbol{\theta }_{k-1} \right> +\frac{\kappa}{2}\|\boldsymbol{\theta }_k-\boldsymbol{\theta }_{k-1}\| _{2}^{2}
% \end{aligned}
% \end{equation}
% Take the weight update law Eq.\eqref{eq:ga} and Eq.\eqref{eq:partial_ga} into the inequality above, we can get lemma 1 and lemma 2, respectively.

\subsection{Proof for Proposition \ref{distributed}}
Before proving proposition \ref{distributed}, we first provide the lemmas below:

\begin{assume}
     The individual loss function $L_j(\mathsf{z}_j;\boldsymbol{\theta }_j)$ is convex and $\kappa_j$-smooth.
\end{assume}

\begin{definition}
    A differentiable function $f(\boldsymbol{x})$ is called $\rho$ strongly-convex if $\nabla^2 f(\boldsymbol{x}) \succeq \rho\mathbf{I}$. For $\rho$ strongly convex functions, the following is true:
    \begin{equation}
        f(\boldsymbol{x}') \geq f(\boldsymbol{x})+\langle\nabla f(\boldsymbol{x}), \boldsymbol{x}'-\boldsymbol{x}\rangle+\frac{\rho}{2}\|\boldsymbol{x}'-\boldsymbol{x}\|_2^2, \forall \boldsymbol{x}',\boldsymbol{x}.
    \end{equation}
\end{definition}

\begin{lemma}
    The augmented Lagrangian function $\mathcal{L} \left( \boldsymbol{\theta }_0,\boldsymbol{\theta }_j,\boldsymbol{\lambda }_j \right) $ is $\rho$ strongly convex with respect to $\boldsymbol{\theta}_0$ and $\boldsymbol{\theta}_j$, respectively.
\end{lemma}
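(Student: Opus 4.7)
The plan is to verify strong convexity through the standard Hessian characterization given in the preceding definition, namely $\nabla^2 f \succeq \rho \mathbf{I}$. Since the augmented Lagrangian in Eq. \eqref{eq:lag} is a sum of a (convex, smooth) loss term, a bilinear coupling term, and a quadratic penalty term, the Hessians with respect to $\boldsymbol{\theta}_0$ and each $\boldsymbol{\theta}_j$ can be read off directly.

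First, I would fix $\boldsymbol{\theta}_j$ and $\boldsymbol{\lambda}_j$ for all $j$ and regard $\mathcal{L}$ as a function of $\boldsymbol{\theta}_0$ alone. The term $\sum_j L_j(\mathsf{z}_j;\boldsymbol{\theta}_j)$ is independent of $\boldsymbol{\theta}_0$ and contributes nothing to the Hessian. The inner-product terms $\langle\boldsymbol{\lambda}_j,\boldsymbol{\theta}_j-\boldsymbol{\theta}_0\rangle$ are linear in $\boldsymbol{\theta}_0$, so they also vanish under second differentiation. Only the quadratic penalty $\tfrac{\rho}{2}\|\boldsymbol{\theta}_j-\boldsymbol{\theta}_0\|_2^2$ is curved in $\boldsymbol{\theta}_0$, contributing a Hessian of $\rho\mathbf{I}$ per device. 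Summing gives $\nabla_{\boldsymbol{\theta}_0}^2 \mathcal{L} = J\rho\mathbf{I} \succeq \rho\mathbf{I}$, which establishes the required strong convexity in $\boldsymbol{\theta}_0$.

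Next, I would fix $\boldsymbol{\theta}_0$, $\boldsymbol{\lambda}_j$, and all $\boldsymbol{\theta}_{j'}$ with $j'\neq j$, and study $\mathcal{L}$ as a function of $\boldsymbol{\theta}_j$ alone. The contributions of other devices and of the coupling $\langle\boldsymbol{\lambda}_j,\boldsymbol{\theta}_j-\boldsymbol{\theta}_0\rangle$ either drop out or are linear in $\boldsymbol{\theta}_j$, while the quadratic penalty contributes $\rho\mathbf{I}$. Because $L_j(\mathsf{z}_j;\boldsymbol{\theta}_j)$ is convex by the preceding assumption, $\nabla^2 L_j \succeq \mathbf{0}$, so
\begin{equation}
\nabla_{\boldsymbol{\theta}_j}^2 \mathcal{L} \;=\; \nabla^2 L_j(\mathsf{z}_j;\boldsymbol{\theta}_j) + \rho\mathbf{I} \;\succeq\; \rho\mathbf{I},
\end{equation}
which delivers $\rho$-strong convexity in $\boldsymbol{\theta}_j$. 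Invoking the definition stated in the preceding block then yields the quadratic lower bound $\mathcal{L}(\boldsymbol{\theta}_j') \geq \mathcal{L}(\boldsymbol{\theta}_j) + \langle \nabla_{\boldsymbol{\theta}_j}\mathcal{L}, \boldsymbol{\theta}_j'-\boldsymbol{\theta}_j\rangle + \tfrac{\rho}{2}\|\boldsymbol{\theta}_j'-\boldsymbol{\theta}_j\|_2^2$, with the analogous inequality holding in $\boldsymbol{\theta}_0$.

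There is no substantive obstacle: both statements reduce to inspecting second derivatives of an explicit quadratic-plus-convex expression. The only point requiring minor care is bookkeeping about which variables are held fixed when computing each Hessian, and noting that the $\boldsymbol{\lambda}_j$ multipliers enter only linearly and hence do not affect curvature. In particular, the convexity (rather than strong convexity) of $L_j$ is sufficient because the $\rho$-strong convexity is supplied entirely by the augmentation term, which is precisely the design purpose of the penalty $\rho$ in ADMM.
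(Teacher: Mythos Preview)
Your proposal is correct and follows essentially the same Hessian-based argument as the paper, which simply records $\nabla_{\boldsymbol{\theta}_j}^2\mathcal{L}=\nabla_{\boldsymbol{\theta}_j}^2 L_j+\rho\mathbf{I}\succeq\rho\mathbf{I}$ and $\nabla_{\boldsymbol{\theta}_0}^2\mathcal{L}\succeq\rho\mathbf{I}$. Your observation that $\nabla_{\boldsymbol{\theta}_0}^2\mathcal{L}=J\rho\mathbf{I}$ (rather than $\rho\mathbf{I}$ as the paper writes) is in fact a sharper accounting, though either suffices for the $\rho$-strong-convexity conclusion.
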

\begin{proof}
    It is easy to verify that:
    \begin{equation}
    \nabla _{\boldsymbol{\theta }_j}^{2}\mathcal{L} =\nabla _{\boldsymbol{\theta }_j}^{2}L_j+\rho \mathbf{I} \succeq \rho \mathbf{I},~~\nabla _{\boldsymbol{\theta }_0}^{2}\mathcal{L} =\rho \mathbf{I} \succeq \rho \mathbf{I},
    \end{equation}
    where $\nabla _{\boldsymbol{\theta }_j}^{2}L_j\succeq \mathbf{0}$ because $L_j(\mathsf{z}_j;\boldsymbol{\theta }_j)$ is convex.
\end{proof}

\begin{lemma} \label{dual}
  The dual variable updates for two consecutive rounds can be bounded as follows:
    \begin{equation}
        \left\| \boldsymbol{\lambda }_{j,k+1}-\boldsymbol{\lambda }_{j,k} \right\| _2\leqslant \kappa _j\left\| \boldsymbol{\theta }_{j,k+1}-\boldsymbol{\theta }_{j,k} \right\| _2.
    \end{equation}
\end{lemma}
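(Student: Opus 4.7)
The plan is to exploit the first-order optimality condition of the primal update for $\boldsymbol{\theta}_{j,k+1}$ to express the dual variable $\boldsymbol{\lambda}_{j,k+1}$ directly in terms of the gradient of $L_j$, and then invoke the $\kappa_j$-smoothness assumption on $L_j$.

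First, I would write down the stationarity condition for the subproblem in Eq.~\eqref{eq:step2}. Since $\boldsymbol{\theta}_{j,k+1}$ minimizes $\mathcal{L}(\boldsymbol{\theta}_{0,k+1},\boldsymbol{\theta}_j,\boldsymbol{\lambda}_{j,k})$ in $\boldsymbol{\theta}_j$, differentiating Eq.~\eqref{eq:lag} with respect to $\boldsymbol{\theta}_j$ and setting it to zero gives
\begin{equation}
\nabla_{\boldsymbol{\theta}_j} L_j(\mathsf{z}_j;\boldsymbol{\theta}_{j,k+1}) + \boldsymbol{\lambda}_{j,k} + \rho(\boldsymbol{\theta}_{j,k+1}-\boldsymbol{\theta}_{0,k+1}) = \mathbf{0}.
\end{equation}
Combining this with the dual update rule in Eq.~\eqref{eq:step3}, namely $\boldsymbol{\lambda}_{j,k+1}=\boldsymbol{\lambda}_{j,k}+\rho(\boldsymbol{\theta}_{j,k+1}-\boldsymbol{\theta}_{0,k+1})$, I obtain the clean identity $\boldsymbol{\lambda}_{j,k+1} = -\nabla_{\boldsymbol{\theta}_j} L_j(\mathsf{z}_j;\boldsymbol{\theta}_{j,k+1})$. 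Applying the same reasoning one step earlier yields $\boldsymbol{\lambda}_{j,k} = -\nabla_{\boldsymbol{\theta}_j} L_j(\mathsf{z}_j;\boldsymbol{\theta}_{j,k})$.

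Subtracting these two identities and taking norms gives
\begin{equation}
\|\boldsymbol{\lambda}_{j,k+1}-\boldsymbol{\lambda}_{j,k}\|_2 = \|\nabla_{\boldsymbol{\theta}_j} L_j(\mathsf{z}_j;\boldsymbol{\theta}_{j,k+1})-\nabla_{\boldsymbol{\theta}_j} L_j(\mathsf{z}_j;\boldsymbol{\theta}_{j,k})\|_2.
\end{equation}
The $\kappa_j$-smoothness assumption on $L_j$ is equivalent (using convexity of $L_j$) to Lipschitz continuity of $\nabla L_j$ with constant $\kappa_j$, so the right-hand side is at most $\kappa_j\|\boldsymbol{\theta}_{j,k+1}-\boldsymbol{\theta}_{j,k}\|_2$, establishing the claim.

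The only delicate point, and what I would treat as the main obstacle, is the translation between the quadratic-upper-bound form of smoothness used in Lemma~\ref{lemma1} and the Lipschitz-gradient form needed here. I would briefly note that under the convexity of $L_j$ (Assumption~1), these two characterizations coincide, so the bound $\|\nabla L_j(\boldsymbol{\theta}')-\nabla L_j(\boldsymbol{\theta})\|_2 \le \kappa_j\|\boldsymbol{\theta}'-\boldsymbol{\theta}\|_2$ is legitimate. Everything else is a short algebraic manipulation of the ADMM optimality conditions.
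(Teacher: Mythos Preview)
Your proposal is correct and matches the paper's proof essentially step for step: the paper also derives $\boldsymbol{\lambda}_{j,k+1}=-\nabla_{\boldsymbol{\theta}_j}L_j(\boldsymbol{\theta}_{j,k+1})$ from the optimality condition of \eqref{eq:step2} combined with \eqref{eq:step3}, then applies $\kappa_j$-smoothness. The only addition in the paper is a one-line remark that for $j\notin\{j:s_j=1\}$ the inequality is trivial since both sides vanish.
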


\begin{proof}
    For EDs that take part in updates in round $s_j=1$.  $\boldsymbol{\vartheta }_{j,k+1}$ is the solution to problem \eqref{eq:step2} and thus it satisfies the following optimality condition:
    \begin{equation}
        \nabla _{\boldsymbol{\theta }_j}L_j\mid_{\boldsymbol{\theta }_{j,k+1}}^{}+\boldsymbol{\lambda }_{j,k}+\rho \left( \boldsymbol{\theta }_{j,k+1}-\boldsymbol{\theta }_{0,k+1} \right) =0.
    \end{equation}
    Combing the dual update step \eqref{eq:step3}, we get:
    \begin{equation}
    \nabla _{\boldsymbol{\theta }_j}L_j\mid_{\boldsymbol{\theta }_{j,k+1}}^{}=-\boldsymbol{\lambda }_{j,k+1}.
    \end{equation}    
    Thereby, we have:
    \begin{equation}
    \begin{aligned}
      \left\| \boldsymbol{\lambda }_{j,k+1}-\boldsymbol{\lambda }_{j,k} \right\| _2&=\left\| \nabla _{\boldsymbol{\theta }_j}L_j\mid_{\boldsymbol{\theta }_{j,k+1}}^{}-\nabla _{\boldsymbol{\theta }_j}L_j\mid_{\boldsymbol{\theta }_{j,k}}^{} \right\| _2
      \\
      &\leqslant \kappa _j\left\| \boldsymbol{\theta }_{j,k+1}-\boldsymbol{\theta }_{j,k} \right\| _2,     
    \end{aligned}
    \end{equation}        
     where the inequality stems from the $\kappa_j$-smooth condition.\\
     For $j\notin \{ j:s_j=1 \}$, the inequality is trivial because both sides are zero.
\end{proof}

Now, we are ready to prove proposition \ref{distributed}.

\begin{proof}
    The changes in Lagrangian function are decomposed into three parts:
    \begin{equation}
        \begin{aligned}
        &\mathcal{L} \left( \boldsymbol{\theta }_{0,k+1},\boldsymbol{\theta }_{j,k+1},\boldsymbol{\lambda }_{j,k+1} \right) -\mathcal{L} \left( \boldsymbol{\theta }_{0,k},\boldsymbol{\theta }_{j,k},\boldsymbol{\lambda }_{j,k} \right) 
        \\
        =&\mathcal{L} \left( \boldsymbol{\theta }_{0,k+1},\boldsymbol{\theta }_{j,k+1},\boldsymbol{\lambda }_{j,k+1} \right) -\mathcal{L} \left( \boldsymbol{\theta }_{0,k+1},\boldsymbol{\theta }_{j,k+1},\boldsymbol{\lambda }_{j,k} \right) 
        \\
        +&\mathcal{L} \left( \boldsymbol{\theta }_{0,k+1},\boldsymbol{\theta }_{j,k+1},\boldsymbol{\lambda }_{j,k} \right) -\mathcal{L} \left( \boldsymbol{\theta }_{0,k+1},\boldsymbol{\theta }_{j,k},\boldsymbol{\lambda }_{j,k} \right) 
        \\
        +&\mathcal{L} \left( \boldsymbol{\theta }_{0,k+1},\boldsymbol{\theta }_{j,k},\boldsymbol{\lambda }_{j,k} \right) -\mathcal{L} \left( \boldsymbol{\theta }_{0,k},\boldsymbol{\theta }_{j,k},\boldsymbol{\lambda }_{j,k} \right) .  
        \end{aligned}
    \end{equation}
    The first part can be computed as follows:
    \begin{equation} \label{eq:part1}
        \begin{aligned}
        &\mathcal{L} \left( \boldsymbol{\theta }_{0,k+1},\boldsymbol{\theta }_{j,k+1},\boldsymbol{\lambda }_{j,k+1} \right) -\mathcal{L} \left( \boldsymbol{\theta }_{0,k+1},\boldsymbol{\theta }_{j,k+1},\boldsymbol{\lambda }_{j,k} \right) 
        \\
        \overset{\left( \mathrm{I} \right)}{=}&\sum\nolimits_j^{}{\left< \boldsymbol{\lambda }_{j,k+1}-\boldsymbol{\lambda }_{j,k},\boldsymbol{\theta }_{j,k+1}-\boldsymbol{\theta }_{0,k+1} \right>}
        \\
        \overset{\left( \mathrm{II} \right)}{=}&\sum\nolimits_{j\in \{ j:s_j=1 \} }^{}{\frac{1}{\rho}\left\| \boldsymbol{\lambda }_{j,k+1}-    \boldsymbol{\lambda }_{j,k} \right\| _{2}^{2}}.
         \end{aligned}
    \end{equation}
    In $(\mathrm{I})$, the expression of the Lagrangian function is used. In $(\mathrm{II})$, the update law of dual multipliers step \eqref{eq:step3} is used. Besides, for $j\notin \{ j:s_j=1 \}$,  $\boldsymbol{\lambda }_{j,k+1}-    \boldsymbol{\lambda }_{j,k}=0$.
    
    The second part can be bounded as follows:
    \begin{equation}\label{eq:part2}
        \begin{aligned}
	&\mathcal{L} \left( \boldsymbol{\theta }_{0,k+1},\boldsymbol{\theta }_{j,k+1},\boldsymbol{\lambda }_{j,k} \right) -\mathcal{L} \left( \boldsymbol{\theta }_{0,k+1},\boldsymbol{\theta }_{j,k},\boldsymbol{\lambda }_{j,k} \right)
        \\
	\overset{(\mathrm{I})}{\leqslant}&\sum_j{\left. \langle \left. \nabla _{\boldsymbol{\theta }_j}\mathcal{L} \right|_{\boldsymbol{\theta }_{j,k+1}},\boldsymbol{\theta }_{j,k+1}-\boldsymbol{\theta }_{j,k} \right.}\rangle -\frac{\rho}{2}\left\| \boldsymbol{\theta }_{j,k+1}-\boldsymbol{\theta }_{j,k} \right\| _{2}^{2}
        \\
	\overset{\left( \mathrm{II} \right)}{\leqslant}&\sum_{j\in \{ j:s_j=1 \} }{\left. \langle \left. \nabla _{\boldsymbol{\theta }_j}\mathcal{L} \right|_{\boldsymbol{\theta }_{j,k+1}},\boldsymbol{\theta }_{j,k+1}-\boldsymbol{\theta }_{j,k} \right.}\rangle -\frac{\rho}{2}\left\| \boldsymbol{\theta }_{j,k+1}-\boldsymbol{\theta }_{j,k} \right\| _{2}^{2}
        \\
	\overset{\left( \mathrm{III} \right)}{\leqslant}&-\sum_{j\in \{ j:s_j=1 \}}{\frac{\rho}{2}}\left\| \boldsymbol{\theta }_{j,k+1}-\boldsymbol{\theta }_{j,k} \right\| _{2}^{2}.
        \end{aligned}
    \end{equation}
     In $(\mathrm{I})$, the property of strongly convex is used when $\boldsymbol{\theta }_j =\boldsymbol{\theta }_{j,k+1}$. In $(\mathrm{II})$, we use the fact that $\boldsymbol{\theta }_{j,k+1}-\boldsymbol{\theta }_{j,k}=0$ for $j \notin \{ j:s_j=1 \}$. In $(\mathrm{III})$, the optimal condition of the decomposed sub-problem \eqref{eq:step2} is used.

    The third part can be bounded as follows:
    \begin{equation}\label{eq:part3}
        \begin{aligned}
         &\mathcal{L} \left( \boldsymbol{\theta }_{0,k+1},\boldsymbol{\theta }_{j,k},\boldsymbol{\lambda }_{j,k} \right) -\mathcal{L} \left( \boldsymbol{\theta }_{0,k},\boldsymbol{\theta }_{j,k},\boldsymbol{\lambda }_{j,k} \right) 
        \\
        \overset{\left( \mathrm{I} \right)}{\leqslant} &\left. \langle \left. \nabla _{\boldsymbol{\theta }_0}\mathcal{L} \right|_{\boldsymbol{\theta }_{0,k+1}},\boldsymbol{\theta }_{0,k+1}-\boldsymbol{\theta }_{0,k} \right. \rangle -\frac{\rho}{2}\left\| \boldsymbol{\theta }_{0,k+1}-\boldsymbol{\theta }_{0,k} \right\| _{2}^{2}
        \\
        \overset{\left( \mathrm{II} \right)}{\leqslant} &-\frac{\rho}{2}\left\| \boldsymbol{\theta }_{0,k+1}-\boldsymbol{\theta }_{0,k} \right\| _{2}^{2}.
        \end{aligned}
    \end{equation}
     In $(\mathrm{I})$, the property of strongly convex is used when $\boldsymbol{\theta }_0 =\boldsymbol{\theta }_{0,k+1}$. In $(\mathrm{II})$, the optimal condition of the decomposed sub-problem \eqref{eq:step1} is used.

    Combining the three parts above:
    \begin{equation}
        \begin{aligned}
        &\mathcal{L} \left( \boldsymbol{\theta }_{0,k+1},\boldsymbol{\theta }_{j,k+1},\boldsymbol{\lambda }_{j,k+1} \right) -\mathcal{L} \left( \boldsymbol{\theta }_{0,k},\boldsymbol{\theta }_{j,k},\boldsymbol{\lambda }_{j,k} \right) 
        \\
        \leqslant &-\sum_{j\in \{ j:s_j=1 \} }^{}{\frac{\rho}{2}\left\| \boldsymbol{\theta }_{j,k+1}-\boldsymbol{\theta }_{j,k} \right\| _{2}^{2}}-\frac{\rho}{2}\left\| \boldsymbol{\theta }_{0,k+1}-\boldsymbol{\theta }_{0,k} \right\| _{2}^{2}
        \\
        \,\,  &+\sum_{j\in \{ j:s_j=1 \} }^{}{\frac{1}{\rho}\left\| \boldsymbol{\lambda }_{j,k+1}-\boldsymbol{\lambda }_{j,k} \right\| _{2}^{2}}
        \\
        \overset{\left(\mathrm{I}\right)}{\leqslant}&-\sum\nolimits_{j\in \{ j:s_j=1 \}}^{}{\left( \frac{\rho}{2}-\frac{\kappa _j}{\rho} \right) \left\| \boldsymbol{\theta }_{j,k+1}-\boldsymbol{\theta }_{j,k} \right\| _{2}^{2}}
        \\
        &-\frac{\rho}{2}\left\| \boldsymbol{\theta }_{0,k+1}-\boldsymbol{\theta }_{0,k} \right\| _{2}^{2}.
        \end{aligned}
    \end{equation}
    where in (I), we use lemma \ref{dual}. Thereby, when $\rho$ is sufficiently large, we can guarantee the Lagrangian is always reducing. Replace the coefficients and we get proposition 4.
\end{proof}

\begin{IEEEbiography}[{\includegraphics[width=1in,height=1.25in,clip,keepaspectratio]{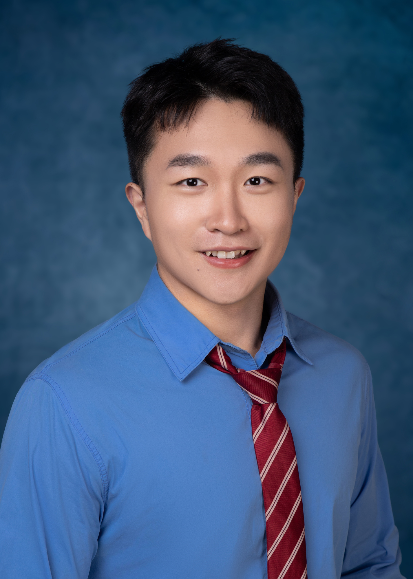}}]{Cheng Feng} received the B.S. degree in Electrical Engineering in Huazhong University of Science and Technology in June, 2019, and the Ph.D. degree in Electrical Engineering from Tsinghua University in June, 2024. During February 2023 to August 2023, he was a visiting scholar in Automatic Control Lab (ifA), ETH Zurich. He is the incoming Ezra Postdoctoral Associate in Cornell University. His research interests include cyber-physical systems data analytics in energy systems.
\end{IEEEbiography}

\begin{IEEEbiography}[{\includegraphics[width=1in,height=1.264in,clip,keepaspectratio]{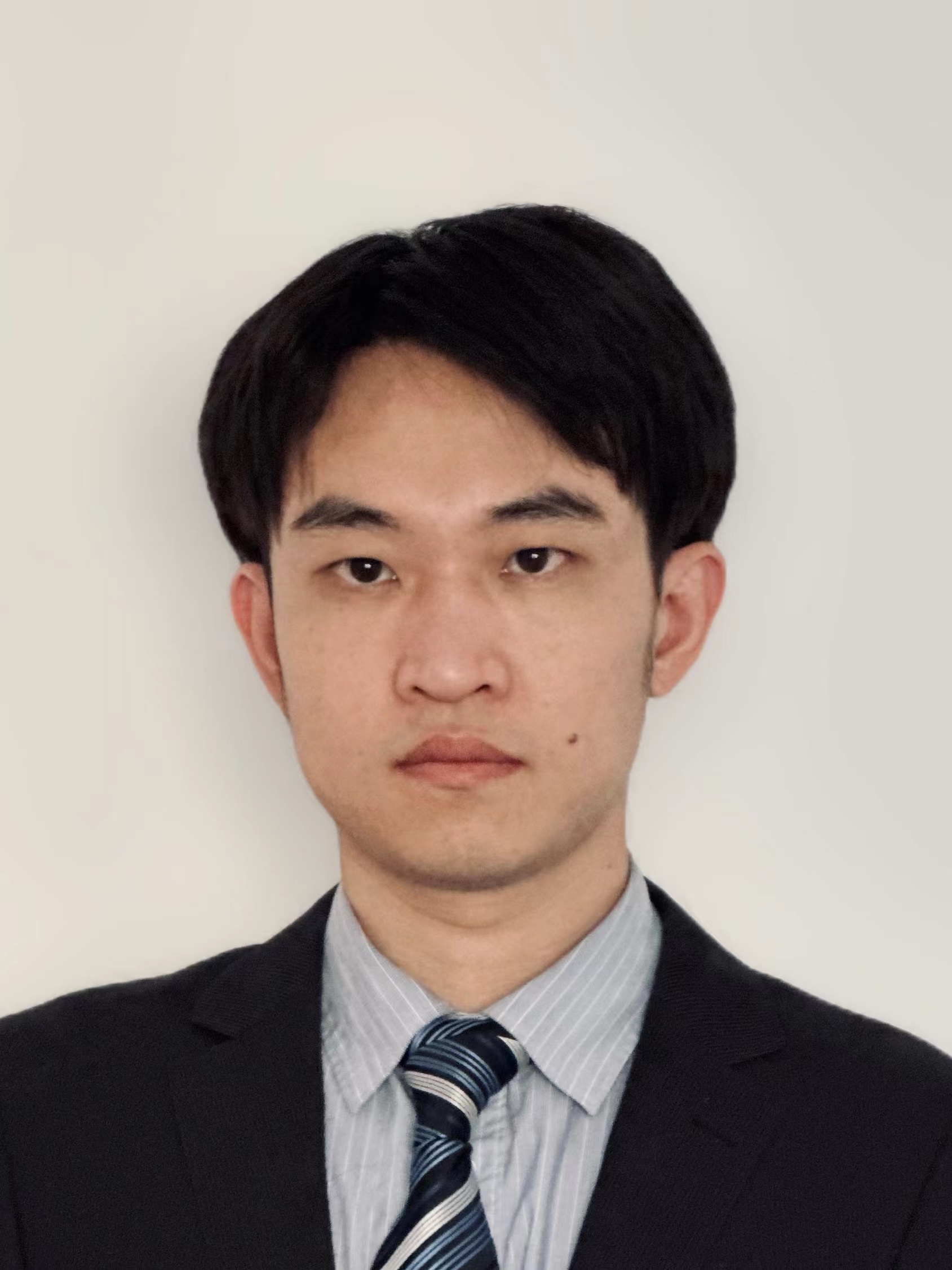}}]{Kedi Zheng} received the B.S. and Ph.D. degrees in electrical engineering from Tsinghua University, Beijing, China, in 2017 and 2022, respectively. He is currently a Research Associate with Tsinghua University. His research interests include data analytics in power systems and electricity markets.
\end{IEEEbiography}

\begin{IEEEbiography}[{\includegraphics[width=1in,height=1.264in,clip,keepaspectratio]{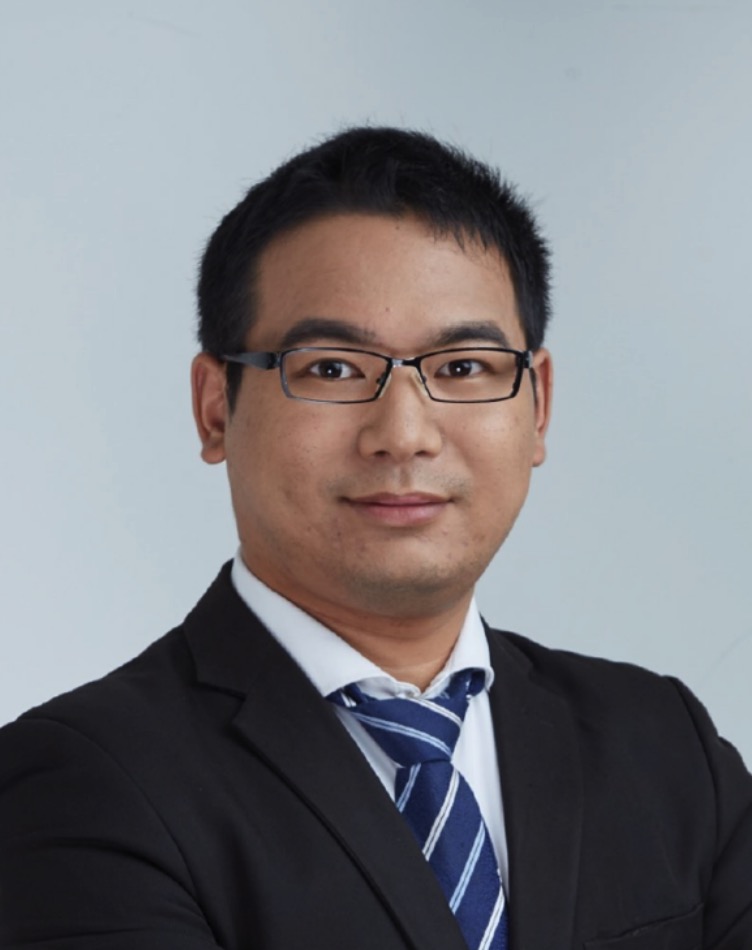}}]{Yi Wang} received the B.S. degree from Huazhong University of Science and Technology in June 2014, and the Ph.D. degree from Tsinghua University in January 2019. He was a visiting student with the University of Washington from March 2017 to April 2018. He served as a Postdoctoral Researcher in the Power Systems Laboratory, at ETH Zurich from February 2019 to August 2021. He is currently an Assistant Professor with the Department of Electrical and Electronic Engineering, The University of Hong Kong. His research interests include data analytics in smart grids, energy forecasting, multi-energy systems, Internet of-things, and cyber-physical-social energy systems.
\end{IEEEbiography}

\begin{IEEEbiography}[{\includegraphics[width=1in,height=1.264in,clip,keepaspectratio]{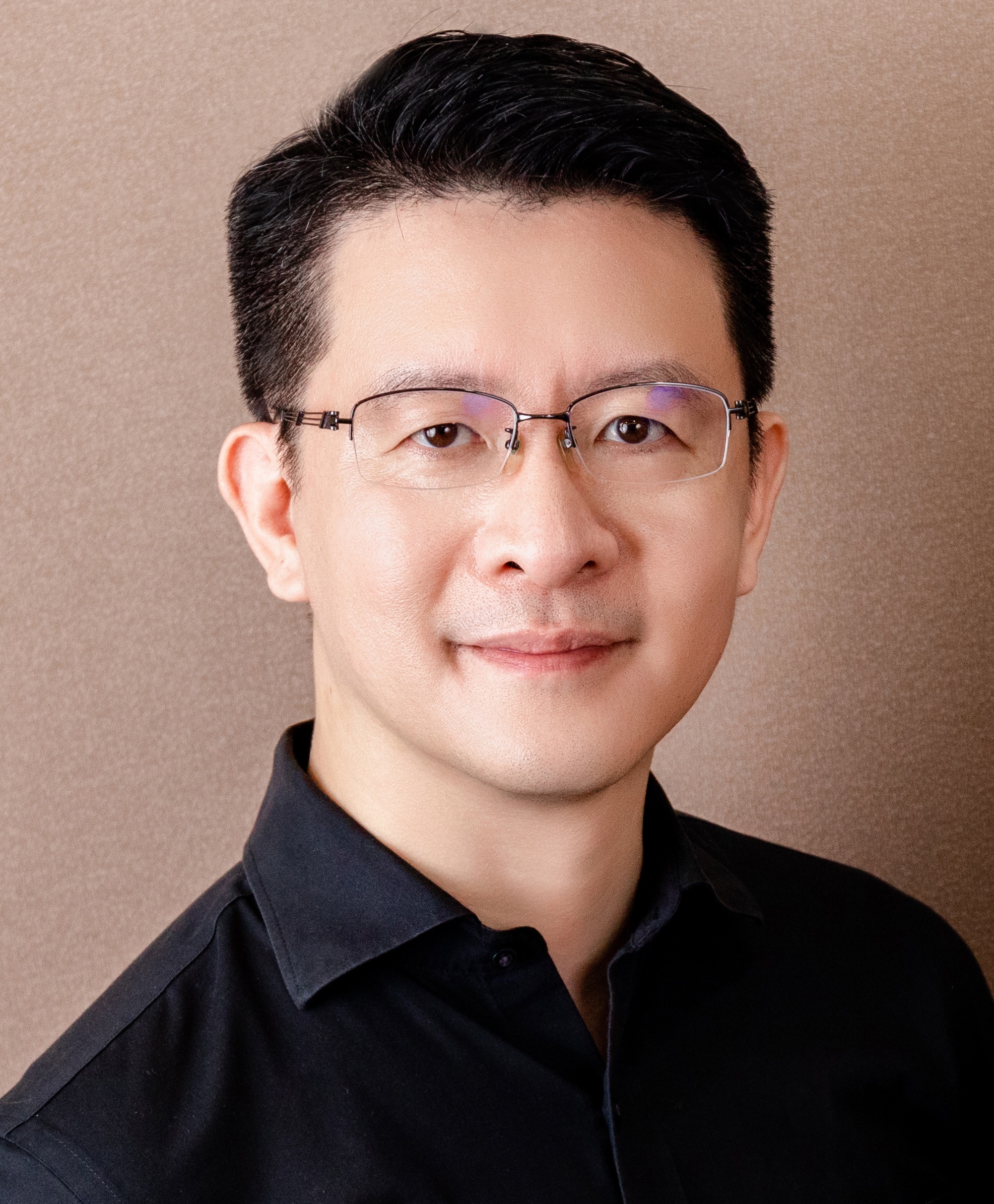}}]{Kaibin Huang} received the B.Eng. and M.Eng. degrees from the National University of Singapore and the Ph.D. degree from The University of Texas at Austin, all in electrical engineering. He is a Professor and the Head at the Dept. of Electrical and Electronic Engineering, The University of Hong Kong (HKU), Hong Kong. He received the IEEE Communication Society’s 2021 Best Survey Paper, 2019 Best Tutorial Paper, 2019 and 2023 Asia–Pacific Outstanding Paper, 2015 Asia–Pacific Best Paper Award, and the best paper awards at IEEE GLOBECOM 2006 and IEEE/CIC ICCC 2018. He has been named as a Web-of-Science Highly Cited Researcher in 2019-2024 and an AI 2000 Most Influential Scholar (Top 30 in Internet of Things) in 2023-2024. He was a IEEE Distinguished Lecturer of both the IEEE Communications Society and the IEEE Vehicular Technology Society. He is a member of the Engineering Panel of Hong Kong Research Grants Council (RGC) and a RGC Research Fellow (2021 Class). He received the Outstanding Teaching Award from Yonsei University, South Korea, in 2011. He is an Area Editor of IEEE Transactions on Wireless Communications, IEEE Transactions on Machine Learning in Communications and Networking, and IEEE Transactions on Green Communications and Networking. Previously, he served on the Editorial Boards for IEEE Journal on Selected Areas in Communications (JSAC) and IEEE Wireless Communication Letters. He has guest edited special issues of IEEE JSAC, IEEE Journal of Selected Areas in Signal Processing, and IEEE Communications Magazine, and IEEE Network. He served as the Lead Chair for the Wireless Communications Symposium of IEEE Globecom 2017 and the Communication Theory Symposium of IEEE GLOBECOM 2023 and 2014, and the TPC Co-chair for IEEE PIMRC 2017 and IEEE CTW 2023 and 2013. He is the founding President of the HKU chapter of National Academy of Inventors.
\end{IEEEbiography}

\begin{IEEEbiography}[{\includegraphics[width=1in,height=1.264in,clip,keepaspectratio]{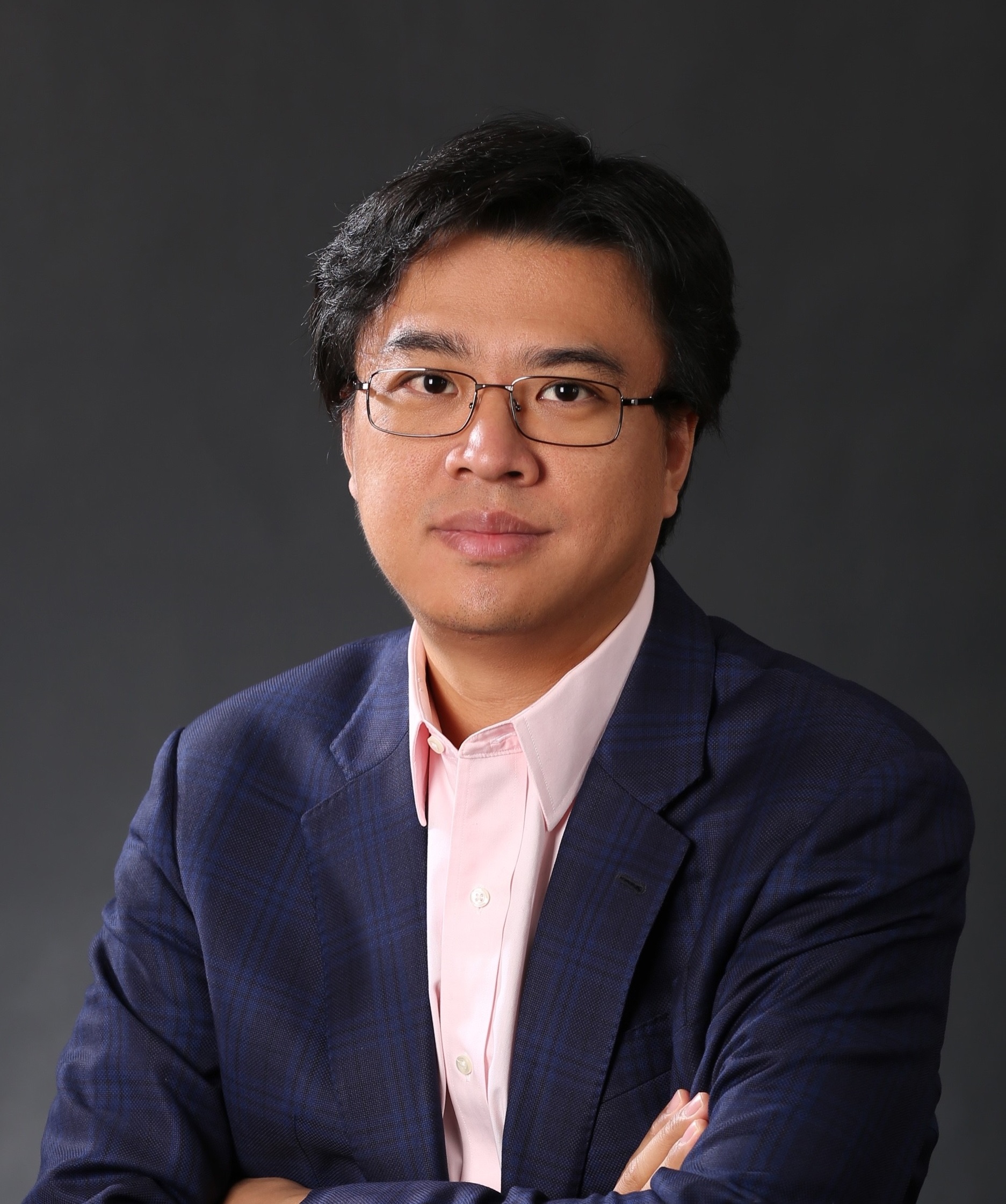}}]{Qixin Chen} received the Ph.D. degree from the Department of Electrical Engineering, Tsinghua University, Beijing, China, in 2010. He is currently a tenured professor with Tsinghua University. His research interests include electricity markets, power system economics and optimization, low-carbon electricity, and data analytics in power systems.
\end{IEEEbiography}
% Can use something like this to put references on a page
% by themselves when using endfloat and the captionsoff option.
\ifCLASSOPTIONcaptionsoff
  \newpage
\fi

% trigger a \newpage just before the given reference
% number - used to balance the columns on the last page
% adjust value as needed - may need to be readjusted if
% the document is modified later
%\IEEEtriggeratref{8}
% The "triggered" command can be changed if desired:
%\IEEEtriggercmd{\enlargethispage{-5in}}

% that's all folks
\end{document}